\newcommand*{\changefont}[3]{%
\fontfamily{#1}\fontseries{#2}\fontshape{#3}\selectfont}
\newtheorem{theorem}{Theorem}
\numberwithin{theorem}{section}
\newtheorem{defin}[theorem]{Definition}
\newtheorem{example}[theorem]{Example}
\newtheorem{lemma}[theorem]{Lemma}
\newtheorem{remark}[theorem]{Remark}
\begin{document}

\title{
Various Views on the Trapdoor Channel and an Upper Bound on its Capacity\thanks{\noindent 
Tobias Lutz is with the
Lehrstuhl f\"ur Nachrichtentechnik,
Technische Universit\"at M\"unchen, D-80290 M\"unchen, Germany
(e-mail: tobi.lutz@tum.de).
}}

\author{Tobias Lutz}

\maketitle

\begin{abstract}
Two novel views are presented on the trapdoor channel. First, by deriving the underlying iterated function system (IFS), it is shown that the trapdoor channel with input blocks of length~$n$ can be regarded as the $n$th element of a sequence of shapes approximating a fractal. Second, an algorithm is presented that fully characterizes the trapdoor channel and resembles the recursion of generating all permutations of a given string. Subsequently, the problem of maximizing a $n$-letter mutual information is considered. It is shown that~$\frac{1}{2}\log_2\left(\frac{5}{2}\right)\approx 0.6610$~bits per use is an upper bound on the capacity of the trapdoor channel. This upper bound, which is the tightest upper bound known, proves that feedback increases the capacity.
\end{abstract}

\begin{IEEEkeywords} 
Trapdoor channel, Lagrange multipliers, convex optimization, iterated function systems, fractals, channels with memory, recursions, permutations.
\end{IEEEkeywords}

\section{Introduction}
\IEEEPARstart{T}he trapdoor channel was introduced by David Blackwell in 1961~\cite{Bla61} and is used by Robert Ash both as a book cover and as an introductory example for channels with memory~\cite{Ash65}. The mapping of channel inputs to channel outputs can be described as follows. Consider a box that contains a ball that is labeled $s_0 \in \{0,1\}$, where the index~$0$ refers to time~$0$. Both the sender and the receiver know the initial ball. In time slot~$1$, the sender places a new ball labeled $x_1 \in \{0,1\}$ in the box. In the same time slot, the receiver chooses one of the two balls $s_0$ or $x_1$ at random while the other ball remains in the box. The chosen ball is interpreted as channel output $y_1$ at time $t=1$ while the remaining ball becomes the channel state $s_1$. The same procedure is applied in every future channel use. In time slot~$2$, for instance, the sender places a new ball $x_2\in \{0,1\}$ in the box and the corresponding channel output $y_2$ is either $x_2$ or $s_1$. The transmission process is visualized in Fig.~\ref{fig:trapdoor}. Fig.~\ref{fig:trapdoor1} shows the trapdoor channel at time~$t$ when the sender places ball~$x_t$ in the box. In the same time slot, the receiver chooses randomly ball $s_{t-1}$ as channel output. Consequently, the upcoming channel state~$s_t$ becomes $x_t$ (see Fig.~\ref{fig:trapdoor2}). At time~$t+1$ the sender places a new ball~$x_{t+1}$ in the box and the receiver draws $y_{t+1}$ from~$s_t$ and~$x_{t+1}$. Table~\ref{tab:transitions} depicts the probability of an output $y_t$ given an input $x_t$ and state $s_{t-1}$. 
\begin{figure}[ht]
\centering
\psfrag{0}{$0$}
\psfrag{1}{$1$}
\psfrag{xt}{$x_{t}$}
\psfrag{xt1}{$x_{t+1}$}
\psfrag{xt2}{$x_{t+2}$}
\psfrag{xt3}{$x_{t+3}$}
\psfrag{st1}{$s_{t-1}$}
\psfrag{st=xt}{$s_t = x_{t}$}
\psfrag{yt1}{$y_{t-1}$}
\psfrag{yt2}{$y_{t-2}$}
\psfrag{yt3}{$y_{t-3}$}
\psfrag{yt=st1}{$y_t = s_{t-1}$}
\subfigure[The trapdoor channel at time~$t$.]{
   \includegraphics[scale =0.85] {./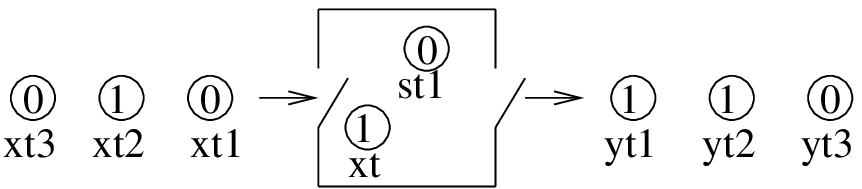}
   \label{fig:trapdoor1}
 }

 \subfigure[The trapdoor channel at time~$t+1$.]{
   \includegraphics[scale =0.85] {./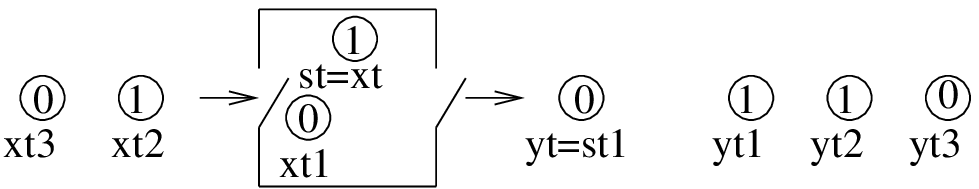}
   \label{fig:trapdoor2}
 }
\caption{At time $t$ the sender places a new ball $x_t$ in the box. The corresponding channel output~$y_t$ is $s_{t-1}$ and the next state~$s_t$ becomes~$x_t$.}
\label{fig:trapdoor}
\end{figure}

Despite the simplicity of the trapdoor channel, the derivation of its capacity seems challenging and is still an open problem. One feature that makes the problem cumbersome is that the distribution of the output symbols may depend on events happening arbitrarily far back in the past since each ball has a positive probability to remain in the channel over any finite number of channel uses. Instead of maximizing $I(X;Y)$ one rather has to consider the multi-letter mutual information, i.e., $\limsup_{n\rightarrow \infty} I(X^n;Y^n)$. 

\begin{table}[ht]
\centering
\caption{Transition probabilities of the trapdoor channel}
\label{table:achievable}
\begin{tabular}{c|c|c|c}
 $x_t$ & $s_{t-1}$ & $p(y_t = 0|x_t,s_{t-1})$ & $p(y_t = 1|x_t,s_{t-1})$ \\ \hline
0&0&1&0 \\
0&1&0.5&0.5 \\
1&0&0.5&0.5 \\
1&1&0& 1
\end{tabular}
\label{tab:transitions}
\end{table}

Let $P_{n|s_0}$ denote the matrix of conditional probabilities of output sequences of length~$n$ given input sequences of length~$n$ where the initial state equals $s_0$. The following ordering of the entries of $P_{n|s_0}$ is assumed. Row indices represent input sequences and column indices represent output sequences. To be more precise, the entry~$\begin{bmatrix}P_{n|s_0}\end{bmatrix}_{i,j}$ is the conditional probability of the binary output sequence corresponding to the integer $j-1$ given the binary input sequence corresponding the the integer $i-1$, $1 \leq i,j \leq 2^n$. For instance, if $n=3$ then $\begin{bmatrix}P_{3|s_0}\end{bmatrix}_{5,3}$ denotes the conditional probability that the channel input $x_1 x_2 x_3 = 100$ will be mapped to the channel output~$y_1 y_2 y_3 = 010$. It was shown in~\cite{KoMo02} that the conditional probability matrices $P_{n|s_0}$ satisfy the recursion laws
\begin{align}
  \label{Pn+1|0}
  P_{n+1|0} &= \begin{bmatrix} P_{n|0} & 0 \\ \frac{1}{2}P_{n|1} & \frac{1}{2}P_{n|0}\\\end{bmatrix}\\
  \label{Pn+1|1}
  P_{n+1|1} &= \begin{bmatrix} \frac{1}{2}P_{n|1} & \frac{1}{2}P_{n|0} \\ 0 & P_{n|1}\\\end{bmatrix},
\end{align}
where the initial matrices are given by $P_{0|0} = P_{0|1} = [1]$. A quick inspection of $P_{2|0}$ and $P_{2|1}$ reveals that the inputs $00$ and $11$ are mapped to disjoint outputs. Hence, a rate of $0.5$~bits per use~(b/u) is achievable from the sender to the receiver. It was shown in~\cite{AhKa87} that~$0.5$~b/u is indeed the zero-error capacity of the trapdoor channel. 

Permuter et al.~\cite{PeHaCuRo08} considered the trapdoor channel under the additional assumption of having a unit delay feedback link available from the receiver to the sender. The sender is able to determine the state of the channel in each time slot. They established that the capacity of the trapdoor channel with feedback is equal to the logarithm of the golden ratio. One can already deduce from this quantity that the achievability scheme involves a constrained coding scheme in which certain sub-blocks are forbidden.

In this paper, we propose two different views on the trapdoor channel. Based on the underlying stochastic matrices~(\ref{Pn+1|0}) and~(\ref{Pn+1|1}), the trapdoor channel can be described geometrically as a fractal or algorithmically as a recursive procedure. We then consider the problem of maximizing the $n$-letter mutual information of the trapdoor channel for any $n \in \mathbb{N}$. We relax the problem by permitting distributions that are not probability distributions. The resulting optimization problem is convex but the feasible set is larger than the probability simplex. Using the method of Lagrange multipliers via a theorem presented in~\cite{Ash65}, we show that~$\frac{1}{2}\log_2\left(\frac{5}{2}\right)\approx 0.6610$~b/u is an upper bound on the capacity of the trapdoor channel. Specifically, the same absolute maximum  $\frac{1}{2}\log_2\left(\frac{5}{2}\right)\approx 0.6610$~b/u results for all trapdoor channels which process input blocks of even length $n$. And the sequence of absolute maxima corresponding to trapdoor channels which process inputs of odd lengths converges to $\frac{1}{2}\log_2\left(\frac{5}{2}\right)$~b/u from below as the block length increases. Unfortunately, the absolute maxima of our relaxed optimization are attained outside the probability simplex, otherwise we would have established the capacity. Nevertheless,  $\frac{1}{2}\log_2\left(\frac{5}{2}\right)\approx 0.6610$~b/u is, to the best of our knowledge, the tightest capacity upper. Moreover, this bound is less than the feedback capacity of the trapdoor channel. 

The organization of this paper is as follows. Section~\ref{sec:fractal} interprets the trapdoor channel as a fractal and derives the underlying iterated function system (IFS). Section~\ref{sec:algorithm} introduces a recursive algorithm which fully characterizes the trapdoor channel. Comments on the permuting nature of the trapdoor channel are provided. Section~\ref{sec:Lagrange} presents a solution to the optimization problem outlined above and derives various recursions. The paper concludes with Section~\ref{sec:concl}.

\subsection{Notation}
The symbols $\mathbb{N}_0$ and $\mathbb{N}$ refer to the natural numbers with and without~$0$, respectively. The canonical basis vectors of $\mathbb{R}^3$ are denoted by $e_x$, $e_y$ and $e_z$. They are assumed to be row vectors. The $n$-fold composition of a function, say~$\Phi$, is denoted as $\Phi^{\circ n}$. The input corresponding to the $i$th row of~$P_{n|s_0}$ is denoted as $x^n_i$. The input corresponding to the $i$th row of~$P_{n|s_0}$ is denoted as $x^n_i$. Further, $I_n$ denotes the $2^n\times2^n$ identity matrix, $\tilde{I}_n$ is a $2^n\times2^n$ matrix whose secondary diagonal entries are all equal to~$1$ while the remaining entries are all equal to~$0$, and $1_n$ denotes a column vector of length~$2^n$ consisting only of ones. The vector~$1_n^T$ is the transpose of $1_n$. For the sake of readability we use $\exp_2(\cdot)$ instead of $2^{(\cdot)}$. If the logarithm $\log_2(\cdot)$ or the exponential function $\exp_2(\cdot)$ is applied to a vector or a matrix, we mean that $\log_2(\cdot)$ or $\exp_2(\cdot)$ of each element of the vector or matrix is taken. Finally, the symbol $\circ$ refers to the Hadarmard product, i.e., the entrywise product of two matrices.

\section{The Trapdoor Channel and Fractal Geometry}
\label{sec:fractal}
\subsection{Prerequisites}
We briefly introduce the idea of \textit{iterated function systems} and \textit{fractals}. For a comprehensive introduction to the subject, see for instance~\cite{Bar88}. In a nutshell, a fractal is a geometric pattern which exhibits self-similarity at every scale. A systematic way for generating a fractal starts with a complete metric space $(M,d)$. The space to which the fractal belongs is, however, not $M$ but the space of non-empty compact subsets of $M$, denoted as~$\mathcal{H}(M)$. A suitable choice for a metric for $\mathcal{H}(M)$ is the Hausdorff distance $h_d(A,B):= \max\{d(A,B), d(B,A)\}$ where $d(A,B):= \max_{x \in A}\min_{y \in B}d(x,y)$, $A,B \in \mathcal{H}(M)$ and analogously for $d(B,A)$. It is then guaranteed that $(\mathcal{H}(M),h_d)$ is a complete metric space and that every contraction mapping\footnote{Let $(M,d)$ be a metric space. Recall that a mapping $\varphi:M \rightarrow M$ is a \textit{contraction} if there exists a $0 < s < 1$ such that $d\left(\varphi(x),\varphi(y)\right)\leq s\cdot d(x,y)$ for all $x,y \in M$.} $\varphi:M\rightarrow M$ on $(M,d)$ becomes a contraction mapping $\varphi:\mathcal{H}(M)\rightarrow \mathcal{H}(M)$ on $(\mathcal{H}(M),h_d)$ defined by $\varphi(A) = \{\varphi(x):x\in A\}$ for all $A \in \mathcal{H}(M)$. 

The following definition and theorem provides a method for generating fractals.
\begin{defin}{\cite[Chapter 3.7]{Bar88}}
A \textnormal{hyperbolic iterated function system (IFS)} consists of a complete metric space $(M,d)$ together with a finite set of contraction mappings $\varphi_n:M \rightarrow M$, with respective contractivity factors $s_n$ for $n = 1, 2,\dots, N$. The notation for the IFS is $\{M; \varphi_n\, n=1,2,\dots,N\}$ and its contractivity factor is $s = \max\{s_n : n = 1,2,\dots,N\}$.
\label{def:IFS}
\end{defin}
The fixed point of a hyperbolic IFS, also called the \textit{attractor} or \textit{self-similar set} of the IFS, is a (deterministic) fractal and results from iterating the IFS with respect to any $A \in \mathcal{H}(M)$. This is the content of the following theorem.
\begin{theorem}{\cite[Chapter 3.7]{Bar88}}
Let $\{M; \varphi_n\, n=1,2,\dots,N\}$ be an iterated function system with contractivity factor $s$. Then the transformation $\Phi:\mathcal{H}(M)\rightarrow \mathcal{H}(M)$ defined by
\begin{equation}
    \Phi(A) = \bigcup_{n=1}^N \varphi_n(A)
\label{thm:Phi}
\end{equation}
for all $A \in \mathcal{H}(M)$, is a contraction mapping on the complete metric space $(\mathcal{H}(M),h_d)$ with contractivity factor~$s$. Its unique fixed point, $A^\star \in \mathcal{H}(M)$, obeys
\[
    A^\star = \Phi(A^\star) = \bigcup_{n=1}^N\varphi_n(A^\star),
\]
and is given by $A^\star = \lim_{k \rightarrow \infty}\Phi^{\circ k}(A)$ for any $A \in \mathcal{H}(M)$.
\label{thm:IFS}
\end{theorem}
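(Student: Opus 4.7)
The plan is to apply the Banach fixed point theorem to $\Phi$ on the complete metric space $(\mathcal{H}(M),h_d)$, so the whole task reduces to verifying that $\Phi$ maps $\mathcal{H}(M)$ into itself and is a contraction with factor $s$. First I would check that each $\varphi_n(A)$ is non-empty and compact (continuous image of a compact set, which is $\varphi_n$-invariant under contractions since contractions are Lipschitz hence continuous), and that a finite union of compact sets is compact. This ensures $\Phi(A)\in\mathcal{H}(M)$ whenever $A\in\mathcal{H}(M)$.

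Next I would show that each lifted map $\varphi_n:\mathcal{H}(M)\to\mathcal{H}(M)$ is a contraction with factor $s_n$. For $A,B\in\mathcal{H}(M)$, using $d(\varphi_n(x),\varphi_n(y))\le s_n\,d(x,y)$ pointwise and the definition of $d(\cdot,\cdot)$ via min-max, one gets $d(\varphi_n(A),\varphi_n(B))\le s_n\,d(A,B)$, and symmetrically $d(\varphi_n(B),\varphi_n(A))\le s_n\,d(B,A)$, so $h_d(\varphi_n(A),\varphi_n(B))\le s_n\,h_d(A,B)$.

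The central step, and the one that does the real work, is the union lemma
\[
 h_d\!\left(\bigcup_{n=1}^N U_n,\bigcup_{n=1}^N V_n\right)\le\max_{1\le n\le N} h_d(U_n,V_n),
\]
for $U_n,V_n\in\mathcal{H}(M)$. I would prove this by observing that for any $x\in\bigcup_n U_n$, say $x\in U_m$, we have $\min_{y\in\bigcup_n V_n}d(x,y)\le\min_{y\in V_m}d(x,y)\le d(U_m,V_m)$, so $d(\bigcup U_n,\bigcup V_n)\le\max_n d(U_n,V_n)$, and symmetrically. Combining this lemma with the previous step yields
\[
 h_d(\Phi(A),\Phi(B))\le \max_{n} h_d(\varphi_n(A),\varphi_n(B))\le\max_{n}s_n\,h_d(A,B)=s\cdot h_d(A,B),
\]
so $\Phi$ is a contraction on $(\mathcal{H}(M),h_d)$ with factor $s<1$.

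Since $(\mathcal{H}(M),h_d)$ is complete by the prerequisites cited before the theorem, the Banach fixed point theorem then immediately gives a unique $A^\star\in\mathcal{H}(M)$ with $\Phi(A^\star)=A^\star=\bigcup_{n=1}^N\varphi_n(A^\star)$ and $A^\star=\lim_{k\to\infty}\Phi^{\circ k}(A)$ for every $A\in\mathcal{H}(M)$, with convergence in Hausdorff distance. The only subtle obstacle I anticipate is justifying the union lemma cleanly, because the Hausdorff distance is not subadditive under arbitrary set operations; the argument above works precisely because we are comparing the same index set on each side, which is exactly how $\Phi$ is defined.
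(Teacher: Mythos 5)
Your proof is correct; the paper does not prove this theorem itself but quotes it from the cited reference, and your argument -- lifting each $\varphi_n$ to a contraction on $\mathcal{H}(M)$, the union estimate $h_d\left(\bigcup_n U_n,\bigcup_n V_n\right)\leq \max_n h_d(U_n,V_n)$, and then the Banach fixed point theorem on the complete space $(\mathcal{H}(M),h_d)$ -- is exactly the standard proof given there. The only point worth stating explicitly is that $s=\max\{s_n: n=1,\dots,N\}<1$ because the IFS is finite, which is what licenses the final appeal to the contraction mapping principle.
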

Many well-known fractals, e.g., the \textit{Koch snowflake}, the \textit{Cantor set}, the \textit{Mandelbrot set}, etc., can be generated using Definition~\ref{def:IFS} and Theorem~\ref{thm:IFS}. Indeed, a segment of the Mandelbrot set is shown on the cover of the book by Cover and Thomas~\cite{CovTho91}. Another famous representative, the \textit{Sierpinski triangle}, is introduced in the following example. We will later see that this fractal is related to the trapdoor channel. 
\begin{example}
\emph{(Sierpinski triangle)} 
Consider the IFS
\begin{equation}
     \left\{[0,1]^2;\varphi_1(x,y) = \left(\frac{x+1}{2},\frac{y}{2}\right),\varphi_2(x,y) = \left(\frac{x}{2},\frac{y+1}{2}\right),\varphi_3(x,y) = \left(\frac{x}{2},\frac{y}{2}\right)\right\}.
     \label{ex:IFS}
\end{equation}
The affine transformations $\varphi_n$, $n=1,2,3$, scale any $A \in \mathcal{H}([0,1]^2)$ by a factor of $0.5$. Additionally, $\varphi_1$ and~$\varphi_2$ introduce translations by $0.5$ into the $x$- and $y$-direction, respectively. The Sierpinski triangle is approximated arbitrarily close by iterating $\Phi(A)$ for any $A \in \mathcal{H}([0,1]^2)$. Fig.~\ref{fig:sierp} shows the result after performing five iterations of~(\ref{ex:IFS}). The initial shape $A$ in Fig.~\ref{fig:sierp1} is a triangle with corner points  $(0,0), (1,0), (0,1)$ and in Fig.~\ref{fig:sierp2} a triangle with corner points $(0,0), (1,1), (1,0)$. As one performs more iterations, both sets converge to the same set $A^\star$.
\end{example}
\begin{figure}[ht]
\centering
\subfigure[The initial shape is a triangle with corner points $(0,0), (1,0), (0,1)$.]{
   \includegraphics[scale =0.6] {./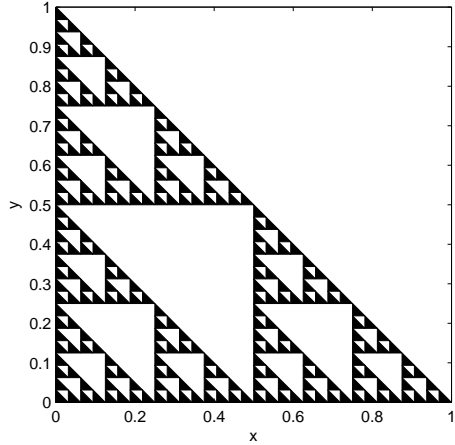}
   \label{fig:sierp1}
 }

 \subfigure[The initial shape is a triangle with corner points $(0,0), (1,1), (1,0)$.]{
   \includegraphics[scale =0.6] {./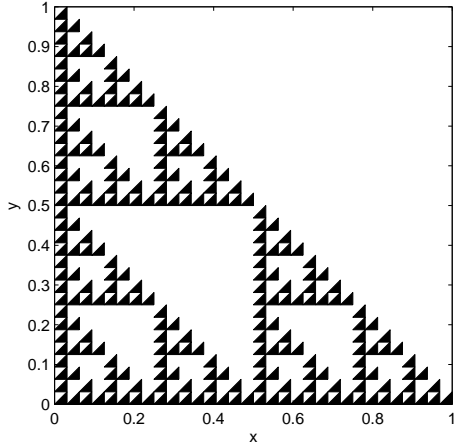}
   \label{fig:sierp2}
 }
\caption{Sierpinski triangle after four iterations of the underlying IFS with two different initial shapes.}
\label{fig:sierp}
\end{figure}

\subsection{The Trapdoor Channel as a Fractal}
In this section, we derive a hyperbolic IFS for the trapdoor channel. Instead of working with $P_{n|s_0}$ we take a geometric approach, i.e., $P_{n|s_0}$ will be mapped to the unit cube $[0,1]^3\subset \mathbb{R}^3$. 
\begin{defin}
Let $\mathcal{M}$ denote the set $\left\{P_{n|s_0}:n \in \mathbb{N}_0, s_0 = 0,1\right\}$ of trapdoor channel matrices. The function $\rho^{(n)}:\mathcal{M}\rightarrow [0,1]^3$ represents each $P_{n|s_0}$ as a shape in $[0,1]^3$ according to
\begin{equation}
    P_{n|s_0}\mapsto \left(x,y,\left[P_{n|s_0}\right]_{i,j}\right), \quad \text{for all }1\leq i,j \leq 2^n
    \label{def2:rho}
\end{equation}
where $(i-1)\cdot 2^{-n}< x < i\cdot 2^{-n}$ and $ 1 - j\cdot 2^{-n}< y < 1 - (j-1)\cdot2^{-n}$. 
\label{def:map}
\end{defin}
Each entry $\left[P_{n|s_0}\right]_{i,j}$ of $P_{n|s_0}$ is identified with a square of side length~$2^{-n}$, which has a distance of  $\left[P_{n|s_0}\right]_{i,j}$ to the $xy$-plane. The alignment of the square corresponding to~$\left[P_{n|s_0}\right]_{i,j}$ with respect to the other squares in $\rho^{(n)}(P_{n|s_0})$ is in accordance to the alignment of~$\left[P_{n|s_0}\right]_{i,j}$ with respect to the other entries of $P_{n|s_0}$. Fig.~\ref{fig:P1} depicts the representations $\rho^{(1)}(P_{1|0})$ and $\rho^{(1)}(P_{1|1})$ of
\begin{equation}
  P_{1|0} = \begin{bmatrix} 1 & 0 \\ \frac{1}{2} & \frac{1}{2}\\\end{bmatrix} \quad
  P_{1|1} = \begin{bmatrix} \frac{1}{2} & \frac{1}{2} \\ 0 & 1\\\end{bmatrix}.\nonumber
\end{equation}
\begin{figure}[ht]
\centering
\subfigure[Color map of  $\rho^{(1)}(P_{1|0})$]{
   \includegraphics[scale =0.45] {./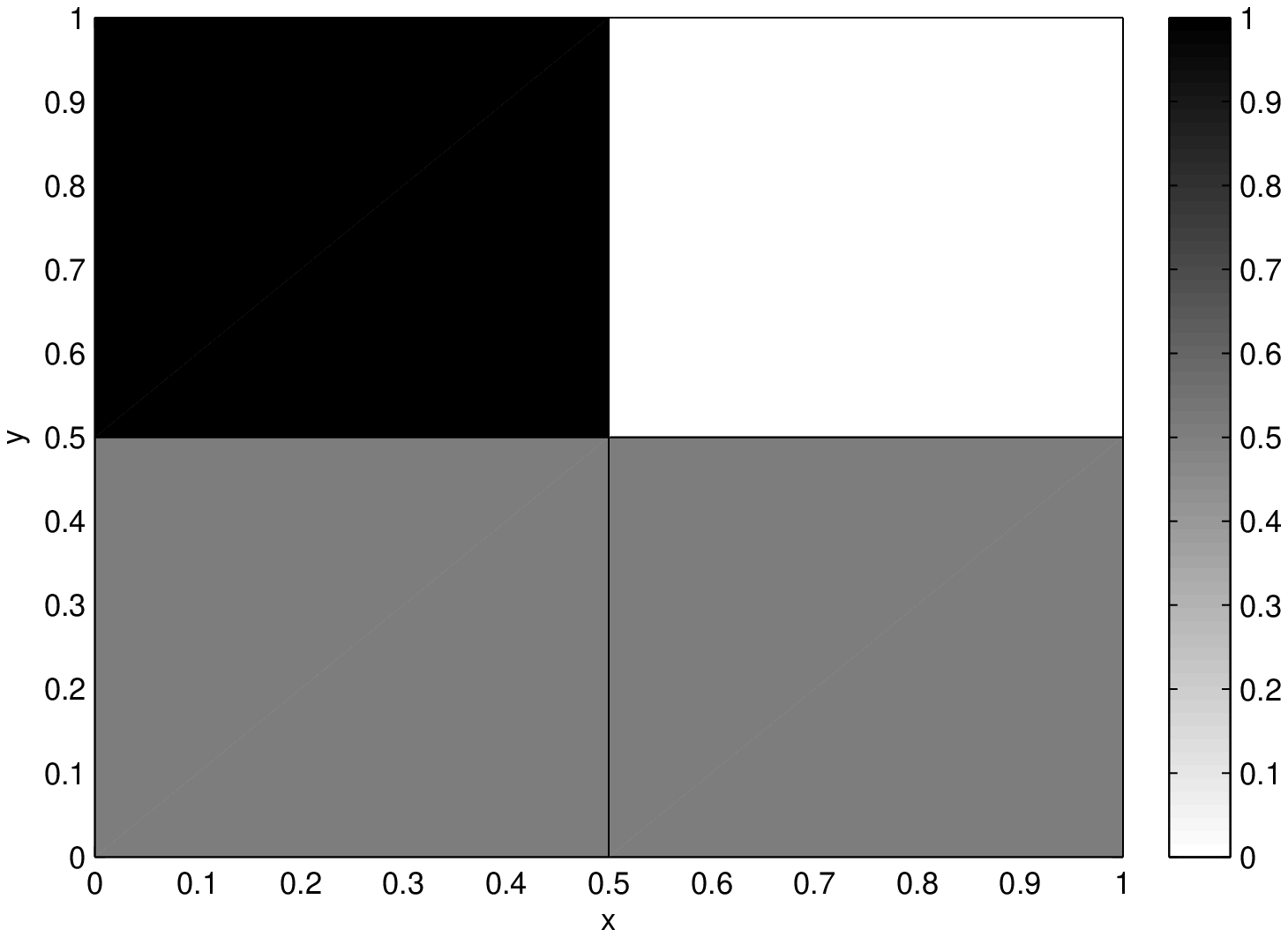}
   \label{fig:P10}
 }

 \subfigure[Color map of $\rho^{(1)}(P_{1|1})$]{
   \includegraphics[scale =0.45] {./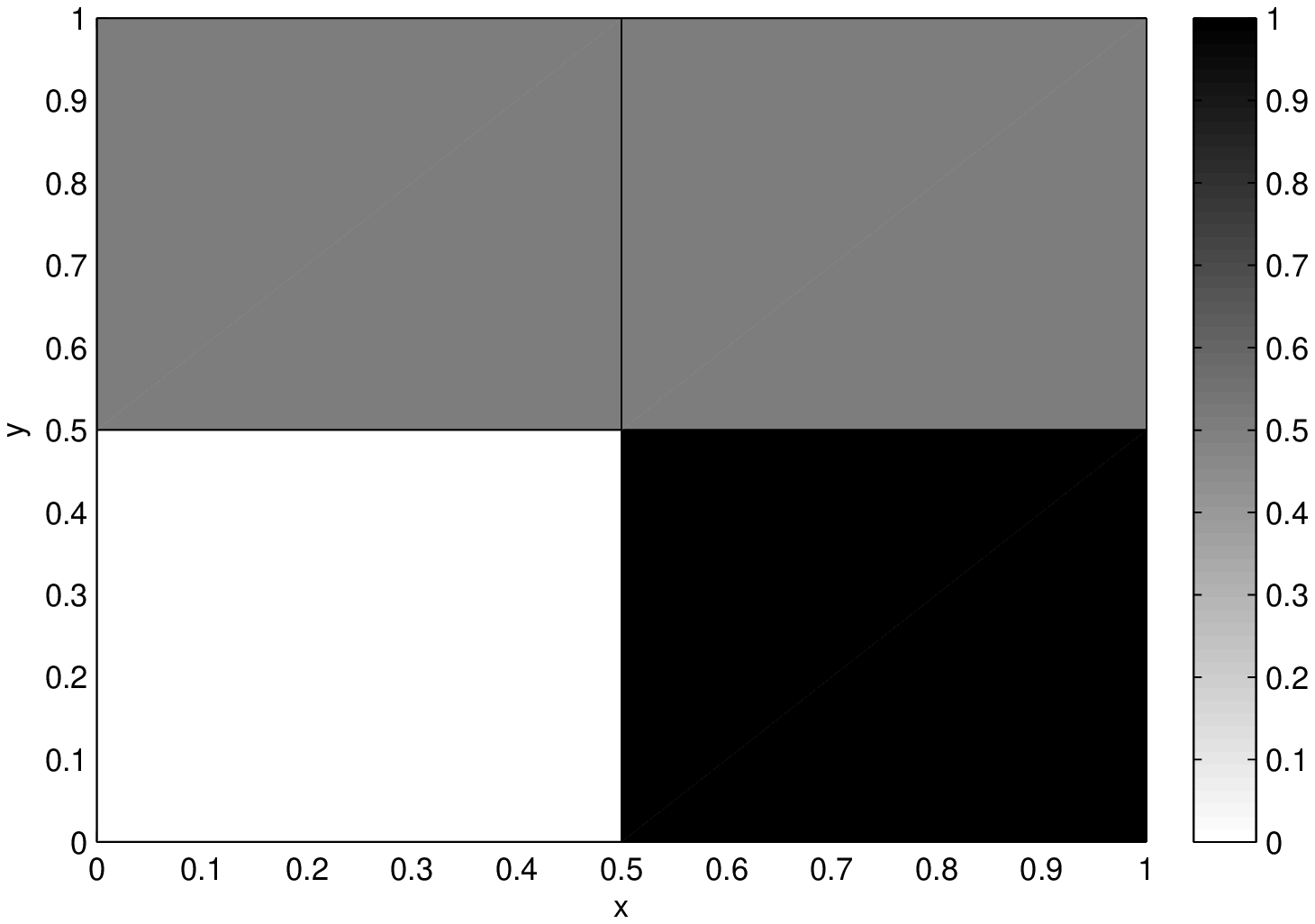}
   \label{fig:P11}
 }
\caption{Color map of the $\rho^{(1)}(P_{1|0})$ and $\rho^{(1)}(P_{1|1})$. Each of the four squares corresponds to one of the conditional probabilities $0, 0.5$ and~$1$.}
\label{fig:P1}
\end{figure}
The following proposition expresses $\rho^{(n+1)}\left(P_{n+1|0}\right)$ and $\rho^{(n+1)}\left(P_{n+1|1}\right)$ recursively in terms of $ \rho^{(n)}\left(P_{n|0}\right)$ and $\rho^{(n)}\left(P_{n|1}\right)$.
\begin{lemma}
  The representations $\rho^{(n+1)}\left(P_{n+1|0}\right)$ and $\rho^{(n+1)}\left(P_{n+1|1}\right)$ of $P_{n+1|0}$ and $P_{n+1|1}$ satisfy the recursion laws
  \begin{align}
     \label{prop1:rec1}
      \rho^{(n+1)}\left(P_{n+1|0}\right) &= \frac{1}{2}\cdot\left\{\rho^{(n)}\left(P_{n|0}\right) + e_x,\hspace{1mm}\rho^{(n)}\left(2\cdot P_{n|0}\right) +  e_y,\hspace{1mm} \rho^{(n)}\left(P_{n|1}\right)\right\}\\
      \rho^{(n+1)}\left(P_{n+1|1}\right) &= \frac{1}{2}\cdot\left\{\rho^{(n)}\left(2\cdot P_{n|1}\right) + e_x,\hspace{1mm} \rho^{(n)}\left(P_{n|1}\right) + e_y,\hspace{1mm} \rho^{(n)}\left(P_{n|0}\right) + e_x + e_y\right\},
      \label{prop1:rec2}
  \end{align}
for all $n \in \mathbb{N}_0$.
\label{prop1}
\end{lemma}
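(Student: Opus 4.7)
The plan is to derive both recursions by computing the image under $\rho^{(n+1)}$ of each of the four $2^n\times 2^n$ sub-blocks of $P_{n+1|s_0}$ described by~(\ref{Pn+1|0}) and~(\ref{Pn+1|1}). First I would unpack Definition~\ref{def:map} for $\rho^{(n+1)}$: the $4^{n+1}$ squares of side $2^{-(n+1)}$ tile $[0,1]^2$ as four $2^n\times 2^n$ sub-meshes, one per quadrant of $[0,1]^2$, and each such sub-mesh is a copy of the tiling used by $\rho^{(n)}$ scaled by $\tfrac{1}{2}$. Because the column index is flipped in the $y$-direction by Definition~\ref{def:map}, the top-left matrix block (rows $1,\dots,2^n$ and columns $1,\dots,2^n$) occupies the plot region $\{x\in(0,\tfrac{1}{2}),\,y\in(\tfrac{1}{2},1)\}$, and the remaining three blocks end up in the other three quadrants by the analogous argument.

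Next I would verify the following ``quadrant lemma'': for any $2^n\times 2^n$ sub-block of $P_{n+1|s_0}$ of the form $\alpha M$ with $M\in\{P_{n|0},P_{n|1}\}$ and $\alpha\in\{1,\tfrac{1}{2}\}$, the portion of $\rho^{(n+1)}(P_{n+1|s_0})$ lying in the quadrant assigned to that sub-block equals $\tfrac{1}{2}\bigl(\rho^{(n)}(2\alpha\cdot M)+t\bigr)$, where $t\in\{0,e_x,e_y,e_x+e_y\}$ is the translation that selects the desired quadrant. The pre-factor $2\alpha$ inside $\rho^{(n)}$ is needed because the overall $\tfrac{1}{2}$-scaling also acts on the $z$-coordinate, so it must be pre-compensated in order for the final heights to equal $\alpha[M]_{i,j}$; in particular $\alpha=\tfrac{1}{2}$ requires no pre-scaling of $M$, while $\alpha=1$ forces the multiplier $2\cdot$ inside $\rho^{(n)}$, explaining the appearance of $\rho^{(n)}(2\cdot P_{n|0})$ in~(\ref{prop1:rec1}) and of $\rho^{(n)}(2\cdot P_{n|1})$ in~(\ref{prop1:rec2}).

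Taking the union of these three images over the non-zero blocks of $P_{n+1|0}$ and pairing each with its quadrant-selecting translation then gives~(\ref{prop1:rec1}); the same argument applied to~(\ref{Pn+1|1}) gives~(\ref{prop1:rec2}). The zero block contributes only squares at $z=0$ and is automatically absorbed into the union. The main obstacle is bookkeeping---keeping the $y$-axis flip and the $2\alpha$ compensation of the $z$-axis consistent, so that each of the three non-zero blocks is matched with the correct translation vector $t$; once this four-way correspondence between blocks, quadrants, and translations is written out explicitly, both recursions fall out by inspection.
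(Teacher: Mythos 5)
Your overall strategy is the same as the paper's: split $P_{n+1|s_0}$ into its four $2^n\times 2^n$ blocks via (\ref{Pn+1|0})--(\ref{Pn+1|1}), observe that halving all three coordinates of $\rho^{(n)}(\cdot)$ followed by a translation from $\{0,e_x,e_y,e_x+e_y\}$ places a block's representation in the corresponding quadrant, and pre-multiply a block by $2$ inside $\rho^{(n)}$ exactly when it enters $P_{n+1|s_0}$ unscaled, so that the $z$-heights match. Two points need attention, though. First, the zero block is not ``automatically absorbed into the union'': none of the three sets on the right-hand side of (\ref{prop1:rec1}) places any square over the quadrant assigned to the zero block, so the stated equality only holds if squares of height $0$ are (implicitly, as in the paper's own proof, which never mentions this block) dropped from the representation; you should say this explicitly rather than claim absorption.

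Second, and more importantly, the ``bookkeeping'' you defer is precisely the content of the lemma, and with the convention you fix at the outset it does not come out as stated. You correctly place the top-left block at $x\in(0,\tfrac{1}{2})$, $y\in(\tfrac{1}{2},1)$, but reading Definition~\ref{def:map} literally (rows along $x$, columns along decreasing $y$), the lower-left block $\tfrac{1}{2}P_{n|1}$ of (\ref{Pn+1|0}) has rows $i>2^n$ and columns $j\le 2^n$, hence occupies the quadrant $x\in(\tfrac{1}{2},1)$, $y\in(\tfrac{1}{2},1)$, which would force the translation $e_x+e_y$ on the third term, while the untranslated quadrant $x,y\in(0,\tfrac{1}{2})$ belongs to the zero block; the analogous swap occurs for the third term of (\ref{prop1:rec2}). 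To land on the stated formulas you must use the correspondence the paper's proof actually employs (lower-right block $\mapsto e_x$, upper-left $\mapsto e_y$, lower-left $\mapsto$ no translation, zero block omitted), i.e.\ the displayed-matrix orientation in which columns run along $x$ and rows along decreasing $y$. So either adopt that orientation explicitly or flag the discrepancy with the literal wording of Definition~\ref{def:map}; as written, your final ``by inspection'' step would produce third terms translated by $e_x+e_y$ rather than those appearing in (\ref{prop1:rec1})--(\ref{prop1:rec2}).
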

\begin{proof}
Recursions~(\ref{prop1:rec1}) and~(\ref{prop1:rec2}) are a consequence of the structure of block matrices~(\ref{Pn+1|0}) and~(\ref{Pn+1|1}), respectively. We just outline the derivation of~(\ref{prop1:rec1}). The first term on the right hand side of~(\ref{prop1:rec1}) represents the lower right corner of~(\ref{Pn+1|0}), i.e., those entries of $P_{n+1|0}$ with row and column indices $2^n < i,j, \leq 2^{n+1}$. Observe that each entry $\left[P_{n+1|0}\right]_{i,j}$ is equal to $\frac{1}{2}\left[P_{n|0}\right]_{i-2^n,j-2^n}$ where $2^n < i,j, \leq 2^{n+1}$. Hence, scaling the three dimensions of $\rho^{(n)}\left(P_{n|0}\right)$ by a factor of $\frac{1}{2}$ and shifting the result by $\frac{1}{2}$ into the $x$-direction yields a representation of the lower right corner of~(\ref{Pn+1|0}) according to Definition~\ref{def:map}. 

Similarly, the second term of~(\ref{prop1:rec1}) represents the upper left corner of~(\ref{Pn+1|0}), i.e., entries of $P_{n+1|0}$ which correspond to row and column indices $1 \leq i,j, \leq 2^n$. To be more precise, each entry $\left[P_{n+1|0}\right]_{i,j}$ is  equal to $\left[P_{n|0}\right]_{i,j}$ where $1 \leq i,j, \leq 2^n$. Hence, scaling the $x$- and $y$-coordinates of $\rho^{(n)}\left(P_{n|0}\right)$ by a factor of $\frac{1}{2}$ and shifting the resulting figure by $\frac{1}{2}$ into the $y$-direction yields a representation of the upper left corner~$P_{n|0}$ of (\ref{Pn+1|0}) according to Definition~\ref{def:map}.

Finally, the last term of~(\ref{prop1:rec1}) represents the lower left corner of~(\ref{Pn+1|0}), i.e., entries of $P_{n+1|0}$ with row and column indices $2^n < i \leq 2^{n+1}$, $1 \leq j \leq 2^n$, respectively. By~(\ref{Pn+1|0}), each entry $\left[P_{n+1|0}\right]_{i,j}$ is equal to $\frac{1}{2}\left[P_{n|1}\right]_{i-2^n,j}$ for the same index pair~$i,j$. Hence, scaling all coordinates of $\rho^{(n)}\left(P_{n|1}\right)$ by a factor of $\frac{1}{2}$ yields a representation of the lower left corner of~(\ref{Pn+1|0}) according to Definition~\ref{def:map}.
\end{proof}
Recursions~(\ref{prop1:rec1}) and  (\ref{prop1:rec2}) will be used below to obtain an iterated function system for the trapdoor channel. Recall from Theorem~\ref{thm:IFS} that an iterated function system is initialized with a single shape. Therefore, it is desirable that the right hand side of~(\ref{prop1:rec1}) just depends on $P_{n|0}$ and the right hand side of~(\ref{prop1:rec2}) just on $P_{n|1}$. The following proposition introduces an affine transformation, which turns $\rho^{(n)}\left(P_{n|0}\right)$ into $\rho^{(n)}\left(P_{n|1}\right)$ and vice versa.
\begin{lemma}
Let $\tau:[0,1]^3\rightarrow [0,1]^3$ be defined as $\tau(x,y,z) = \left(-x+1,-y+1,z\right)$. Then
\begin{align}
\label{prop2:rec1}
\rho^{(n)}\left(P_{n|1}\right) &= \tau \circ \rho^{(n)}\left(P_{n|0}\right)\\
\rho^{(n)}\left(P_{n|0}\right) &= \tau \circ \rho^{(n)}\left(P_{n|1}\right),
\label{prop2:rec2}
\end{align}
for all $n \in \mathbb{N}_0$.
\label{prop2}
\end{lemma}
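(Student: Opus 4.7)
My plan is to reduce the geometric statement about $\tau$ to a matrix identity about $P_{n|0}$ and $P_{n|1}$, and then verify that identity by induction on $n$ using the recursions~(\ref{Pn+1|0}) and~(\ref{Pn+1|1}).

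First I would unpack what $\tau$ does to $\rho^{(n)}(P_{n|0})$ under Definition~\ref{def:map}. Since $\tau$ leaves the $z$-coordinate fixed and reflects the $xy$-plane through $(1/2,1/2)$, a small index computation shows that the square occupying $((i-1)2^{-n},i\,2^{-n})\times(1-j\,2^{-n},1-(j-1)2^{-n})$ at height $[P_{n|0}]_{i,j}$ is mapped to the square sitting over the cell with indices $(2^n-i+1,\,2^n-j+1)$ at the same height. Hence $\tau\circ \rho^{(n)}(P_{n|0}) = \rho^{(n)}(Q)$ where $[Q]_{i,j}=[P_{n|0}]_{2^n-i+1,\,2^n-j+1}$, which in matrix form is $Q=\tilde{I}_n P_{n|0}\tilde{I}_n$. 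Therefore~(\ref{prop2:rec1}) and~(\ref{prop2:rec2}) are both equivalent to the single matrix identity
\begin{equation}
P_{n|1}=\tilde{I}_n P_{n|0}\tilde{I}_n, \qquad n\in\mathbb{N}_0,
\label{plan:matident}
\end{equation}
since $\tilde{I}_n^2=I_n$.

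Next I would prove~(\ref{plan:matident}) by induction on $n$. The base case $n=0$ is immediate from $P_{0|0}=P_{0|1}=[1]$ and $\tilde{I}_0=[1]$. For the inductive step I would use the block identity
\[
\tilde{I}_{n+1}=\begin{bmatrix}0 & \tilde{I}_n\\ \tilde{I}_n & 0\end{bmatrix},
\]
together with~(\ref{Pn+1|0}), to compute
\[
\tilde{I}_{n+1}P_{n+1|0}\tilde{I}_{n+1}=\begin{bmatrix}\tfrac{1}{2}\tilde{I}_n P_{n|0}\tilde{I}_n & \tfrac{1}{2}\tilde{I}_n P_{n|1}\tilde{I}_n\\ 0 & \tilde{I}_n P_{n|0}\tilde{I}_n\end{bmatrix}.
\]
Applying the induction hypothesis (in both directions, since it gives $\tilde{I}_n P_{n|0}\tilde{I}_n=P_{n|1}$ and $\tilde{I}_n P_{n|1}\tilde{I}_n=P_{n|0}$) turns the right-hand side into precisely the block form of $P_{n+1|1}$ in~(\ref{Pn+1|1}). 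This closes the induction and yields~(\ref{plan:matident}), hence~(\ref{prop2:rec1}); identity~(\ref{prop2:rec2}) follows either by applying $\tau$ to both sides of~(\ref{prop2:rec1}) and using $\tau\circ\tau=\mathrm{id}$, or by the symmetric computation with~(\ref{Pn+1|1}) in place of~(\ref{Pn+1|0}).

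The only real obstacle is bookkeeping: one must be careful about the convention in Definition~\ref{def:map} that the column index is associated with the $y$-axis via $1-j\cdot 2^{-n}<y<1-(j-1)\cdot 2^{-n}$, so that the reflection $y\mapsto 1-y$ corresponds to $j\mapsto 2^n-j+1$ rather than to some other permutation. Once the translation between $\tau$ and the action $M\mapsto \tilde{I}_n M\tilde{I}_n$ is established, the algebraic step is a short block-matrix calculation and the remainder of the proof is mechanical. A remark worth including is that~(\ref{plan:matident}) also reflects the obvious $0\leftrightarrow 1$ symmetry of the trapdoor mechanism (complementing all inputs, outputs and the initial state leaves conditional probabilities invariant), which is precisely why~$\tau$ is the right transformation.
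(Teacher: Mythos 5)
Your proof is correct, but it takes a different route from the paper's. The paper proves~(\ref{prop2:rec1}) by induction entirely at the geometric level: it verifies the base case from the pictures of $\rho^{(1)}(P_{1|0})$ and $\rho^{(1)}(P_{1|1})$ and, for the inductive step, applies $\tau$ to the three-piece recursion of Lemma~\ref{prop1} (equation~(\ref{prop1:rec1})) and checks that the result matches the right-hand side of~(\ref{prop1:rec2}). You instead translate the action of $\tau$ on a representation into conjugation of the underlying matrix by $\tilde{I}_n$ (your index computation $i\mapsto 2^n-i+1$, $j\mapsto 2^n-j+1$ is the right bookkeeping for Definition~\ref{def:map}), thereby reducing the lemma to the identity $P_{n|1}=\tilde{I}_n P_{n|0}\tilde{I}_n$, which you prove by a block-matrix induction on~(\ref{Pn+1|0}) and~(\ref{Pn+1|1}). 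That identity is precisely Lemma~\ref{lem:trafo}(a) of the paper, and your inductive computation is essentially the one the paper gives there in Section~\ref{sec:Lagrange}; so your argument effectively front-loads that later lemma and obtains the geometric statement as a corollary. What each approach buys: yours avoids the figure-based verification of the base case, yields a reusable algebraic fact that the paper needs again anyway, and makes the $0\leftrightarrow 1$ symmetry explicit; the paper's version stays inside the geometric framework of Section~\ref{sec:fractal}, reusing Lemma~\ref{prop1}, which is the object actually combined with this lemma to produce Theorem~\ref{thm:rec_trapdoor}. One cosmetic caveat: $\rho^{(n)}$ is formally defined only on the set $\mathcal{M}$ of channel matrices, so when you write $\tau\circ\rho^{(n)}(P_{n|0})=\rho^{(n)}(\tilde{I}_n P_{n|0}\tilde{I}_n)$ you are implicitly extending~(\ref{def2:rho}) to arbitrary $2^n\times 2^n$ matrices; this is harmless but worth a sentence.
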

\begin{proof}
Equation (\ref{prop2:rec2}) follows from (\ref{prop2:rec1}) by noting that $\tau\circ\tau = id$. It remains to prove (\ref{prop2:rec1}), which we do by induction. Observe that the affine transformation $\tau$ corresponds to a counter-clockwise rotation through 180 degree about the $z$-axis and a translation by one into the $x$- and $y$-direction. Using this property, (\ref{prop2:rec1}) is readily verified from Fig.~\ref{fig:P1} for $n=1$. Now assume that the assertion holds for some $n>1$. A direct computation of~$\tau \circ \rho^{(n+1)}\left(P_{n+1|0}\right)$ using the right hand side of~(\ref{prop1:rec1}) and the induction hypotheses~(\ref{prop2:rec1}) and~(\ref{prop2:rec2}) shows that $\tau \circ \rho^{(n+1)}\left(P_{n+1|0}\right)$ is equivalent to the right hand side of (\ref{prop1:rec2}).
\end{proof}
We can now state the final recursion law. A combination of Lemma~\ref{prop1} and Lemma~\ref{prop2}, i.e., replacing $\rho^{(n)}\left(P_{n|1}\right)$ in (\ref{prop1:rec1}) with (\ref{prop2:rec1}) and $\rho^{(n)}\left(P_{n|0}\right)$ in (\ref{prop1:rec2}) with (\ref{prop2:rec2}), and using~(\ref{def2:rho}) yields the following theorem.
\begin{theorem}
  The representations~$\rho^{(n+1)}\left(P_{n+1|0}\right)$ and~$ \rho^{(n+1)}\left(P_{n+1|1}\right)$ of $P_{n+1|0}$ and $P_{n+1|1}$ with initial matrices $P_{0|0} = P_{0|1} = 1$ satisfy the following recursion laws
  \begin{align}
     \label{thm:rec1}
      \rho^{(n+1)}\left(P_{n+1|0}\right) &= \Bigg\{\phi_1(x,y,z) = \left(\frac{x+1}{2}, \frac{y}{2},\frac{\left[P_{n|0}\right]_{i,j}}{2}\right), \phi_2(x,y,z) = \left(\frac{x}{2},\frac{y+1}{2} ,\left[P_{n|0}\right]_{i,j}\right), \nonumber\\
      &\hspace{0.9cm}\phi_3(x,y,z) = \left(-\frac{x-1}{2},-\frac{y-1}{2},\frac{\left[P_{n|0}\right]_{i,j}}{2}\right)\Bigg\}\\
      \rho^{(n+1)}\left(P_{n+1|1}\right) &= \Bigg\{\psi_1(x,y,z) = \left(\frac{x+1}{2},\frac{y}{2} ,\left[P_{n|1}\right]_{i,j}\right), \psi_2(x,y,z) = \left(\frac{x}{2}, \frac{y+1}{2},\frac{\left[P_{n|1}\right]_{i,j}}{2}\right), \nonumber\\
      &\hspace{0.9cm}\psi_3(x,y,z) = \left(-\frac{x}{2}+1,-\frac{y}{2}+1,\frac{\left[P_{n|1}\right]_{i,j}}{2}\right)\Bigg\},
      \label{thm:rec2}
  \end{align}
where $(i-1)\cdot 2^{-n}< x < i\cdot 2^{-n}$ and $ 1 - j\cdot 2^{-n}< y < 1 - (j-1)\cdot2^{-n}$ for $1\leq i,j \leq 2^n$ .  
\label{thm:rec_trapdoor}
\end{theorem}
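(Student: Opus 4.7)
The plan is to build the claimed recursions from the two lemmas already proved, treating the theorem essentially as a symbolic combination followed by a careful unpacking of Definition~\ref{def:map}. The inputs are Lemma~\ref{prop1}, which expresses $\rho^{(n+1)}(P_{n+1|s_0})$ as a union of three scaled/translated copies involving both $\rho^{(n)}(P_{n|0})$ and $\rho^{(n)}(P_{n|1})$, and Lemma~\ref{prop2}, which says these two shapes are related by the involution $\tau(x,y,z)=(-x+1,-y+1,z)$. Substituting the latter into the former will eliminate the mixed dependence, so that the right-hand side of~(\ref{prop1:rec1}) depends on $\rho^{(n)}(P_{n|0})$ alone and the right-hand side of~(\ref{prop1:rec2}) depends on $\rho^{(n)}(P_{n|1})$ alone. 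This is the conceptual step; the rest is verifying the coordinate expressions.

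Concretely, I would first pick a generic point $(x,y,z)\in\rho^{(n)}(P_{n|0})$ with $z=[P_{n|0}]_{i,j}$ and $(x,y)$ in the square indexed by $(i,j)$ as in~(\ref{def2:rho}). I would then unfold each of the three operations in~(\ref{prop1:rec1}):
\begin{itemize}
\item For the first term, the point $(x,y,z)$ is shifted by $e_x$ and then the whole shape is scaled by $1/2$, giving $((x+1)/2,y/2,z/2)$, which is exactly $\phi_1$.
\item For the second term, the argument $2\cdot P_{n|0}$ doubles only the $z$-coordinate, then the $e_y$-shift and the $1/2$-scaling produce $(x/2,(y+1)/2,z)$, i.e., $\phi_2$; note how the factor of two in the $z$-direction is precisely cancelled by the global $1/2$-scaling, which is what leaves $\phi_2$ without the $1/2$ in its third component.
\item For the third term, I substitute $\rho^{(n)}(P_{n|1})=\tau\circ\rho^{(n)}(P_{n|0})$ from~(\ref{prop2:rec1}), obtaining the point $(-x+1,-y+1,z)$, and then apply the global $1/2$-scaling to arrive at $(-(x-1)/2,-(y-1)/2,z/2)$, which is $\phi_3$.
\end{itemize}
This gives~(\ref{thm:rec1}). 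The derivation of~(\ref{thm:rec2}) is completely analogous, starting from~(\ref{prop1:rec2}), using~(\ref{prop2:rec2}) to replace $\rho^{(n)}(P_{n|0})$, and tracking the additional $e_x+e_y$ shift on the third summand so that the translation lands at $(-x/2+1,-y/2+1,z/2)$, matching $\psi_3$.

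The statement of the theorem implicitly requires the three affine maps to be well-defined and compatible with the parametrization, so I would conclude by checking that each of $\phi_k$ and $\psi_k$ is applied to points $(x,y,z)$ with $(x,y)$ in the correct index range $(i-1)2^{-n}<x<i\cdot 2^{-n}$, $1-j\cdot 2^{-n}<y<1-(j-1)2^{-n}$ and with $z=[P_{n|s_0}]_{i,j}$; this is just the domain constraint carried over from Definition~\ref{def:map}. There is no real obstacle in the proof: everything reduces to bookkeeping on three one-line affine computations. The only place to be attentive is the second term, where the factor $2\cdot P_{n|0}$ inside $\rho^{(n)}(\cdot)$ prevents the $z$-coordinate from being halved, and the third term of each recursion, where the orientation-reversing action of $\tau$ must be composed correctly with the subsequent $1/2$-scaling and translations; getting the signs and constants in $\phi_3$ and $\psi_3$ right is the one step warranting explicit verification.
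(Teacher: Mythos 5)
Your proposal is correct and follows essentially the same route as the paper, which proves the theorem precisely by substituting Lemma~\ref{prop2} into Lemma~\ref{prop1} (replacing $\rho^{(n)}(P_{n|1})$ in~(\ref{prop1:rec1}) via~(\ref{prop2:rec1}) and $\rho^{(n)}(P_{n|0})$ in~(\ref{prop1:rec2}) via~(\ref{prop2:rec2})) and invoking Definition~\ref{def:map}. Your explicit coordinate checks of $\phi_1,\phi_2,\phi_3$ and $\psi_1,\psi_2,\psi_3$, including the cancellation of the factor $2$ in the $z$-direction and the handling of the $e_x+e_y$ shift, just spell out what the paper leaves implicit.
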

\begin{remark}
The restrictions of $\phi_1,\phi_2$, $\phi_3$ and $\psi_1,\psi_2$, $\psi_3$ to the $x$- and $y$-dimensions are contraction mappings. They compose two hyperbolic IFS with a unique attractor each. Moreover, (\ref{thm:rec1}) and ~(\ref{thm:rec2}) are initialized with $P_{0|0} = 1$ and $P_{0|1} = 1$, respectively. Hence, $\lim_{n\rightarrow \infty}\rho^{(n)}\left(P_{n|s_0}\right)$, $s_0\in\{0,1\}$, can be approximated arbitrarily close by iterating (\ref{thm:rec1}) and (\ref{thm:rec2}), respectively, (according to Theorem~\ref{thm:IFS}) for any initial shape $A \in \mathcal{H}([0,1]^3)$ such that the restriction of $A$ to the $z$-dimension equals $1$. Both IFS follow directly from~(\ref{thm:rec1}) and (\ref{thm:rec2}) and read
\begin{align}
 \label{rem:IFS}
 \bigg\{[0,1]^3;\phi_1 = \left(\frac{x+1}{2}, \frac{y}{2},\frac{z}{2}\right), \phi_2 = \left(\frac{x}{2},\frac{y+1}{2} ,z\right),\phi_3 = \left(-\frac{x-1}{2},-\frac{y-1}{2},\frac{z}{2}\right)\bigg\}.\\
 \bigg\{[0,1]^3;\psi_1 = \left(\frac{x+1}{2}, \frac{y}{2},z\right), \psi_2 = \left(\frac{x}{2},\frac{y+1}{2} ,\frac{z}{2}\right),\psi_3 = \left(-\frac{x}{2}+1,-\frac{y}{2}+1,\frac{z}{2}\right)\bigg\}.
\end{align}
There is also a relation to the Sierpinski triangle. Observe that $\phi_1$, $\phi_2$ and $\psi_1$, $\psi_2$, respectively, restricted to the $xy$-plane are equal to $\varphi_1$, $\varphi_2$ in~(\ref{ex:IFS}). 
\end{remark}
\begin{figure}[ht]
\centering
\subfigure[The $z$-dimension is visualized by means of gray colors. The gray scale is the one used in Fig.~\ref{fig:P1}]{
   \includegraphics[scale =0.5] {./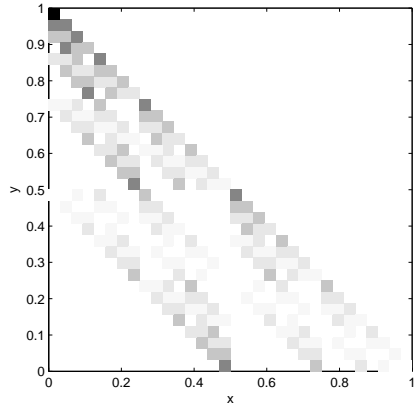}
   \label{fig:trapdoor1}
 }

 \subfigure[Restriction of Fig.~(a) to the $x$- and $y$-dimensions.]{
   \includegraphics[scale =0.5] {./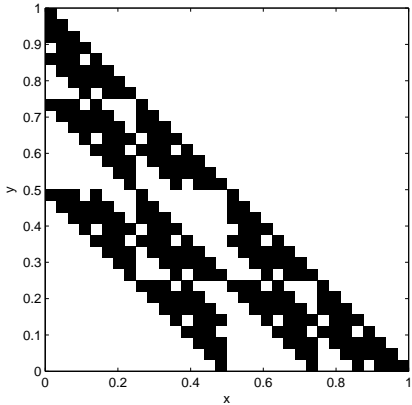}
   \label{fig:trapdoor2}
 }
 \subfigure[A more accurate approximation of the fractal where the IFS~(\ref{rem:IFS}) is restricted to the $x$- and $y$-dimensions.]{
   \includegraphics[scale =0.5] {./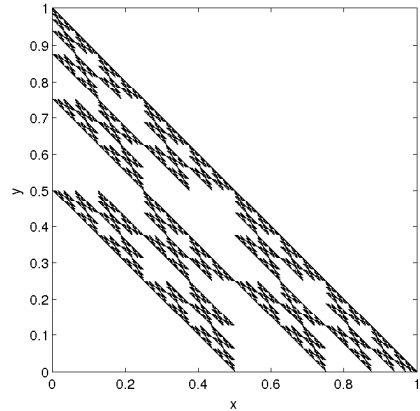}
   \label{fig:limit}
 }
\caption{The result of running $4$ iterations (Fig.~(a), (b)) and $11$ iterations (Fig.~(c)) of the IFS~(\ref{rem:IFS}). The initial shape~$A$ has been chosen to be $\{(x,y,z)\in {[0,1]^3}:z =1\}$.}
\label{fig:trapdoor}
\end{figure}

\section{Algorithmic view of the trapdoor channel}
\label{sec:algorithm}
\subsection{Remarks on the Permutation Nature}
The trapdoor channel has been called a permuting channel~\cite{AhKa87}, where the output is a permutation of the input~\cite{PeHaCuRo08}. We point out that in general not all possible permutations of the input are feasible and that not every output is a permutation of the input. The reason that not all permutations are feasible is that the channel actions are causal, i.e., an input symbol at time~$n$ cannot become a channel output at a time instance smaller than $n$. Consider, for instance, a vector~$101$ which, when applied to a trapdoor channel with initial state~$0$, cannot give rise to an output~$110$. Next, not every output is a permutation of the input because at a certain time instance the initial state might become an output symbol and, therefore, the resulting output sequence might not be compatible with a permutation of the input. For illustration purposes, consider again the previous example, i.e., a vector~$101$ and initial state~$0$. Two of the feasible outputs are~$010$ and~$001$ which are not permutations of~$110$.

\subsection{The Algorithm}
The following recursive procedure {\changefont{ptm}{m}{sc}generateOutputs} computes the set of feasible output sequences and their likelihoods given an input sequence and an initial state. 

\begin{algorithmic}
\Procedure{generateOutputs}{$in, out, state, prob$}
\If{$in = \emptyset$ }
\State $set$ $\gets \{out,prob\}$
\ElsIf{$in[0] = state$}
\State $out \gets out + in[0]$
\State $set \gets$ {\changefont{ptm}{m}{sc}generateOutputs}($in.substr(1), out, state, prob$) 
\Else
\State $out \gets out + in[0]$
\State $set \gets$ {\changefont{ptm}{m}{sc}generateOutputs}($in.substr(1), out, state, 0.5\cdot prob$)
\State $out[out.length()-1] \gets state$ \Comment{$in[0]$ is removed from the end of $out$}
\State $set \gets$ {\changefont{ptm}{m}{sc}generateOutputs}($in.substr(1), out, in[0], 0.5\cdot prob$)
\EndIf
\State \textbf{return} $set$
\EndProcedure
\end{algorithmic}

The four variables $in, out, state$ and $prob$ have the following meaning: $in$ denotes the part of the input string that has not been processed yet; $out$ indicates the part of one particular output string that has been generated so far; $state$ refers to the current channel state; $prob$ denotes the likelihood of $out$. The procedure is initialized with the complete input string and the initial state of the channel; $out$ is initially empty while $prob$ equals~$1$. The first \textbf{if} statement checks the simple case of the recursion, i.e., whether the input string has been processed completely. If yes, then the corresponding output~$out$ and its likelihood~$prob$ is stored and returned in~$set$. Otherwise, we distinguish whether the next input symbol $in[0]$ is equal to the current state. If yes, then the next output takes the value of $in[0]$ (or of $state$ but both are equal), i.e., $out \gets out + in[0]$, with probability $1$ and the procedure {\changefont{ptm}{m}{sc}generateOutputs} is applied recursively to the unprocessed part of the input string, i.e., to $in.substr(1)$, the substring of $in$ with indices greater than~$0$. Clearly, $state$ and $prob$ do not change and, therefore, are passed unmodified to the recursive call. In the other case, i.e., when~$in[0]$ is not equal to the current state, the next output symbol will have a probability of $0.5$ to be either $in[0]$ or $state$. If $in[0]$ becomes the channel output, the following state remains the same. Then the remaining input string $in.substr(1)$ is processed by the recursive call {\changefont{ptm}{m}{sc}generateOutputs}($in.substr(1), out, state, 0.5\cdot prob$). However, if $state$ becomes the channel output, then the following state will be~$in[0]$ and the remaining input string is processed by {\changefont{ptm}{m}{sc}generateOutputs}($in.substr(1), out, in[0], 0.5\cdot prob$). Note that a recursive implementation of the algorithm is needed since it works for inputs of any length, which is not the case if only iterative control structures are used. 

The outlined procedure gives a complete characterization of the trapdoor channel. Generating outputs and their corresponding likelihoods for a particular input sequence might be instrumental for designing codes. Finally, the design of the algorithm resembles a recursion for generating all permutations of a string~(see, e.g., \cite[ch. 8.3]{Rob14}). This gives an algorithmic justification for why some output sequences are permutations of the underlying input sequence.

\section{A Lagrange Multiplier Approach to the Trapdoor Channel}
\label{sec:Lagrange}

\subsection{Problem Formulation}
In this section, we derive an upper bound on the capacity of the trapdoor channel. Specifically, for any $n\in\mathbb{N}$, we find a solution to the optimization problem
\begin{align}
\label{objective}
\text{maximize}&\hspace{0.5cm} \frac{1}{n}I\left(X^n;Y^n|s_0\right)\nonumber\\
&=\frac{1}{n}\sum_{i = 1}^{2^n}\sum_{j = 1}^{2^n}p_i\left[P_{n|s_0}\right]_{i,j}\log\frac{\left[P_{n|s_0}\right]_{i,j}}{\sum_{k=1}^{2^n}p_k\left[P_{n|s_0}\right]_{k,j}}\\
\label{optconstraint1}
\text{subject to}&\hspace{0.5cm} \sum_{i=1}^{2^n} p_i = 1\\
&\hspace{0.5cm}\sum_{k=1}^{2^n}p_k\left[P_{n|s_0}\right]_{k,j}\geq 0\quad \text{for all }1\leq j \leq 2^n.
\label{optconstraint2}
\end{align}
We do not have to distinguish between \textit{lower capacity} and \textit{upper capacity}~\cite[Chapter 4.6]{Gal68} since it does not matter whether the optimization is with respect to inital state~$0$ or~$1$ due to symmetry reasons. Constraint~(\ref{optconstraint2}) guarantees that the argument of the logarithm does not become negative. The feasible set, defined by~(\ref{optconstraint1}) and~(\ref{optconstraint2}), is convex. It includes the set of probability mass functions, but might be larger. To see this note that (\ref{optconstraint2}) is a weighted sum of all $p_k$ where each weight~$\left[P_{n|s_0}\right]_{k,j}$ is nonnegative. Clearly, (\ref{optconstraint1}) and (\ref{optconstraint2}) are satisfied by probability distributions. However, there might exist ``distributions'' which involve negative values and sum up to one but still satisfy (\ref{optconstraint2}). Moreover, the objective function $n^{-1}I\left(X^n;Y^n|s_0\right)$ is concave on the set of probability distributions, which follows by using the same arguments that show that mutual information is concave on the set of input probability distributions. Consequently, the optimization problem is convex and every solution maximizes $n^{-1}I\left(X^n;Y^n|s_0\right)$. In the following, the maximum value is denoted as $C_n^\uparrow$. Taking the limit of the sequence $\left(C_n^\uparrow\right)_{n\in\mathbb{N}}$ as $n$ grows, one obtains either the capacity of the trapdoor channel or an upper bound on the capacity, depending on whether the limit is attained inside or outside the set of probability distributions, respectively. 

\subsection{Using a Result from the Literature}
The reason for considering~(\ref{optconstraint2}) and not the more natural constraints $p_k \geq 0$ for all~$k$ is that a closed form solution can be obtained by applying the method of \textit{Lagrange multipliers} to~(\ref{objective}) and~(\ref{optconstraint1}). In particular, setting the partial derivatives of 
\begin{equation}
\frac{1}{n}I\left(X^n;Y^n|s_0\right) + \lambda \sum_{i=1}^{2^n} p_i
\label{Lagrangian}
\end{equation}
with respect to each of the~$p_i$ equal to zero results in a closed form solution of the considered optimization problem. 

This was done in \cite[Theorem 3.3.3]{Ash65} for general discrete memoryless channels which are square and non singular. Note that $P_{n|s_0}$ is square and non singular (see Lemma~\ref{lem:P2n+2_rec}~(b)). Moreover, we assume that the channel~$P_{n|s_0}$ is memoryless by repeatedly using it over a large number of input blocks of length~$n$. This has the consequence that $C_n^\uparrow$ might be an upper bound on the capacity of a trapdoor channel that is constrained to input blocks of length~$n$. The reason is that some input blocks might drive the channel~$P_{n|s_0}$ into the opposite state $s_0 \oplus 1$, i.e., the upcoming input block would see the channel~$P_{n|s_0 \oplus 1}$ (whose $C_n^\uparrow$ is equal to $C_n^\uparrow$ of $P_{n|s_0}$ by symmetry). However, by assuming that the channel does not change over time, the sender always knows the channel state before a new block is transmitted. Hence, $C_n^\uparrow$ might be an upper bound (even though it is attained on the set of probability distributions). Nevertheless, this issue can be ignored if $n$ goes to infinity because in the asymptotic regime the channel $P_{n|s_0}$ is used only once. But we are interested in the asymptotic regime since the limit of the sequence~$\left(C_n^\uparrow\right)_{n\in\mathbb{N}}$ is also its supremum (see Theorem~\ref{thm:main}). 

In summary, we can apply \cite[Theorem 3.3.3]{Ash65} which yields
\begin{equation}
C_n^{\uparrow} = \frac{1}{n}\log_2\sum_{j=1}^{2^n}\exp_2\left(-\sum_{i=1}^{2^n}\begin{bmatrix}P_{n|s_0}^{-1}\end{bmatrix}_{j,i} H(Y^n|X^n = x_i^n)\right),
\label{thm:ash:eq2}
\end{equation}
attained at
\begin{equation}
p_i = 2^{-C_n^\uparrow}d_i,\quad i = 1,2,\dots,2^n
\end{equation}
where $d_i$ equals
\begin{equation}
\sum_{j=1}^{2^n} \begin{bmatrix}P_{n|s_0}^{-1}\end{bmatrix}_{j,k}\exp_2\left(-\sum_{i=1}^M \begin{bmatrix}P_{n|s_0}^{-1}\end{bmatrix}_{j,i}H(Y^n|X^n = x_i^n)\right). 
\label{thm:ash:condition}
\end{equation}
Clearly, $\begin{bmatrix}p_1,\dots,p_{2^n}\end{bmatrix}$ is a probability distribution only if $d_i\geq 0$. Observe that the Lagrangian~(\ref{Lagrangian}) does not involve the constraint~(\ref{optconstraint2}). However, the proof of\cite[Theorem 3.3.3]{Ash65} shows that $\sum_{k=1}^{2^n}p_k\left[P_{n|s_0}\right]_{k,j}$ equals
\begin{equation}
\exp\left(\lambda -\sum_{i=1}^M \begin{bmatrix}P_{n|s_0}^{-1}\end{bmatrix}_{j,i}H(Y^n|X^n = \mathbf{x}_i) -1\right)  
\end{equation}
for all $1\leq j\leq 2^n$. Hence, (\ref{optconstraint2}) is satisfied.

We remark that (\ref{thm:ash:eq2}) in matrix notation reads
\begin{equation}
C_n^\uparrow = \frac{1}{n}\log_2\left[1_n^T\exp_2\left(P_{n|s_0}^{-1}\left(P_{n|s_0}\circ\log_2 P_{n|s_0}\right)1_n\right)\right].
\label{C:ash:matrix}
\end{equation}
In the remainder, we will evaluate~(\ref{C:ash:matrix}).

\subsection{Useful Recursions}
To evaluate~(\ref{C:ash:matrix}), we derive recursions for $-\left(P_{n|s_0}\circ\log_2 P_{n|s_0}\right)1_{n}$ and $P_{n|s_0}^{-1}\left(P_{n|s_0}\circ\log_2 P_{n|s_0}\right)1_{n}$. The two expressions are formally defined next. Based on these recursions, we find exact numerical expressions for~(\ref{C:ash:matrix}) in Theorem~\ref{thm:main} below. 
\begin{defin}
(a) The \textnormal{conditional entropy vector} $h_{n|s_0}$ of $P_{n|s_0}$, $s_0\in\{0,1\}$, is defined as 
\begin{align}
h_{n|s_0}&=\begin{bmatrix} H(Y^n|X^n = x^n_1) & \dots & H(Y^n|X^n = x^n_{2^{n}})\end{bmatrix}^T\\
\label{hadamard}
&=-\left(P_{n|s_0}\circ\log_2 P_{n|s_0}\right)1_{n}
\end{align}
where $n \in \mathbb{N}_0$.\\
(b) The \textnormal{weighted conditional entropy vector} $\omega_{n|s_0}$ of $P_{n|s_0}$, $s_0\in\{0,1\}$, is defined as
\begin{align}
  \label{proof:omega_first}
 \omega_{n|s_0}&=-P_{n|s_0}^{-1}\cdot h_{n|s_0}\\
 &=P_{n|s_0}^{-1}\left(P_{n|s_0}\circ\log_2 P_{n|s_0}\right)1_{n}
 \label{proof:omega}
\end{align}
where $n \in \mathbb{N}_0$.
\label{def:cond}
\end{defin}
We remark that $h_{n|s_0}$ and $\omega_{n|s_0}$ are column vectors with $2^n$ entries. The following two lemmas provide tools that we need for the proof of Lemma~\ref{lemma:cond_entropy_rec} and Lemma~\ref{lemma:weighted_cond_entropy_rec}. 
\begin{lemma}
(a) The trapdoor channel matrices $P_{2n+2|0}$ and $P_{2n+2|1}$, $n \in \mathbb{N}_0$, satisfy the following recursions:
 \begin{align}
    \label{P_{2n+2|0}}
    P_{2n+2|0} &= \begin{bmatrix} P_{2n|0} & 0 & 0 & 0 \\ \frac{1}{2}P_{2n|1} & \frac{1}{2}P_{2n|0} & 0 & 0\\ \frac{1}{4}P_{2n|1} & \frac{1}{4}P_{2n|0} & \frac{1}{2}P_{2n|0} & 0 \\ 0 & \frac{1}{2}P_{2n|1} & \frac{1}{4}P_{2n|1} & \frac{1}{4}P_{2n|0}\end{bmatrix}\\
    \label{P_{2n+2|1}}
    P_{2n+2|1} &= \begin{bmatrix}\frac{1}{4}P_{2n|1} & \frac{1}{4}P_{2n|0} & \frac{1}{2}P_{2n|0} & 0\\ 0 & \frac{1}{2}P_{2n|1} & \frac{1}{4}P_{2n|1} & \frac{1}{4}P_{2n|0} \\ 0 & 0 & \frac{1}{2}P_{2n|1} & \frac{1}{2}P_{2n|0} \\ 0 & 0 & 0 & P_{2n|1} \end{bmatrix}.
  \end{align}
(b) Let $M_0 := P^{-1}_{2n|0}P_{2n|1}P^{-1}_{2n|0}$ and $M_1 := P^{-1}_{2n|1}P_{2n|0}P^{-1}_{2n|1}$. The inverses of $P_{2n+2|0}$ and $P_{2n+2|1}$, $n \in \mathbb{N}_0$, satisfy the following recursions:
 \begin{align}
    \label{P^{-1}_{2n+2|0}}
    P^{-1}_{2n+2|0} &= \begin{bmatrix} P^{-1}_{2n|0} & 0 & 0 & 0 \\ -M_0 & 2 P^{-1}_{2n|0} & 0 & 0\\ 0 & -P^{-1}_{2n|0} & 2P^{-1}_{2n|0} & 0 \\ 2 M_0 P_{2n|1}P^{-1}_{2n|0} & -3 M_0 & -2M_0 & 4P^{-1}_{2n|0}\end{bmatrix}
  \end{align}
  \begin{align}
    \label{P^{-1}_{2n+2|1}}
    P^{-1}_{2n+2|1} &= \begin{bmatrix} 4 P^{-1}_{2n|1} & -2 M_1 & -3 M_1 & 2 M_1P_{2n|0}P^{-1}_{2n|1}  \\ 0 & 2 P^{-1}_{2n|1} & -P^{-1}_{2n|1} & 0\\ 0 & 0 & 2P^{-1}_{2n|1} & -M_1 \\ 0 & 0 & 0 & P^{-1}_{2n|1}\end{bmatrix}.
  \end{align}
  \label{lem:P2n+2_rec}
\end{lemma}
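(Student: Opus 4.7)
The plan is to reduce both parts of the lemma to manipulations of $2\times 2$ block matrices, using the one-step recursions~(\ref{Pn+1|0}) and~(\ref{Pn+1|1}) twice in succession and standard block-triangular inversion.

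For part~(a), I would apply~(\ref{Pn+1|0}) and~(\ref{Pn+1|1}) iteratively. First express $P_{2n+1|0}$ and $P_{2n+1|1}$ in terms of the pair $\{P_{2n|0}, P_{2n|1}\}$ by a direct application of those recursions. Then substitute these $2\times 2$ block expressions into the same recursions applied once more, now at index $2n+1$, to obtain $P_{2n+2|0}$ and $P_{2n+2|1}$ as nested $2\times 2$ block matrices. Expanding the nesting yields the claimed $4\times 4$ block forms, with the diagonal blocks scaled by the expected powers of $\tfrac12$. No hidden cancellations occur; this step is pure bookkeeping.

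For part~(b), the key observation is that the matrices exhibited in part~(a) are block triangular: $P_{2n+2|0}$ is block lower triangular and $P_{2n+2|1}$ is block upper triangular. Invertibility of $P_{n|s_0}$ follows by induction on $n$: the base case $P_{0|s_0}=[1]$ is trivial, and since a block-triangular matrix is invertible iff all its diagonal blocks are, the one-step recursions preserve invertibility. For the explicit form of the inverse, I would impose $P_{2n+2|0}\,P^{-1}_{2n+2|0} = I$ blockwise. Reading off the diagonal blocks $P^{-1}_{2n|0},\,2P^{-1}_{2n|0},\,2P^{-1}_{2n|0},\,4P^{-1}_{2n|0}$ is immediate by inverting each $A_{ii}$. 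Each sub-diagonal block $B_{ij}$ with $i>j$ is then obtained from the standard back-substitution $A_{ii}B_{ij} = -\sum_{k=j}^{i-1}A_{ik}B_{kj}$ inherent to triangular systems, starting from the lowest index $j$ and working downward in $i$. The compound factor $M_0 = P^{-1}_{2n|0}P_{2n|1}P^{-1}_{2n|0}$ naturally appears each time the back-substitution bridges a $P_{2n|0}$-based diagonal block with a row whose off-diagonal block carries $P_{2n|1}$. The analysis of $P^{-1}_{2n+2|1}$ is entirely symmetric, driven by $M_1$.

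The main obstacle is keeping careful track of the $(4,1)$ entry of $P^{-1}_{2n+2|0}$ (and, symmetrically, the $(1,4)$ entry of $P^{-1}_{2n+2|1}$), which requires the longest chain of back-substitutions and produces the cascaded product $2M_0 P_{2n|1}P^{-1}_{2n|0}$. Because $P_{2n|0}$ and $P_{2n|1}$ do not commute in general, the factors cannot be reordered, and the shorthand $M_0$ is essential for writing the formula compactly. Final verification reduces to checking a handful of block identities of the form $P^{-1}_{2n|s_0}P_{2n|s_0}=I$ together with cancellations involving $M_0P_{2n|0}$ and $M_0P_{2n|1}$, which can be done by direct multiplication of the claimed inverse against the matrix from part~(a).
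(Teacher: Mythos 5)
Your proposal is correct and in substance follows the paper's own proof: part (a) is obtained exactly as in the paper by substituting the one-step recursions (\ref{Pn+1|0}) and (\ref{Pn+1|1}) into themselves, and part (b) is standard block-triangular inversion. The only organizational difference is that the paper computes the inverses by a twofold application of the $2\times 2$ block-triangular inversion lemma (first to the $2\times 2$ super-block partition, then to each super-block), whereas you solve the $4\times 4$ block-triangular system directly by back-substitution against $P_{2n+2|s_0}P_{2n+2|s_0}^{-1}=I$; the two computations are equivalent (e.g.\ your $(4,1)$ block $2P_{2n|0}^{-1}P_{2n|1}M_0$ coincides with the stated $2M_0P_{2n|1}P_{2n|0}^{-1}$ since both expand to the same product), and your explicit induction for nonsingularity makes precise a point the paper leaves implicit.
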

\begin{proof}
(a): Substituting $P_{2n+2-1|0}$ and $P_{2n+2-1|1}$ into $P_{2n+2|0}$ and $P_{2n+2|1}$, where the four matrices are expressed as in~(\ref{Pn+1|0}) and~(\ref{Pn+1|1}), yields~(\ref{P_{2n+2|0}}) and~(\ref{P_{2n+2|1}}).\\
(b): Two versions of the matrix inversion lemma are~\cite{GolLoa96}
\begin{align}
\label{invLemma1}
\begin{bmatrix} A & 0 \\ C & D\end{bmatrix}^{-1} &= \begin{bmatrix}A^{-1} & 0 \\ -D^{-1}CA^{-1} & D^{-1} \end{bmatrix}\\
\label{invLemma2}
\begin{bmatrix} A & B \\ 0 & D\end{bmatrix}^{-1} &= \begin{bmatrix}A^{-1} & -A^{-1}BD^{-1} \\ 0 & D^{-1} \end{bmatrix}.                                                 
\end{align}
Divide (\ref{P_{2n+2|0}}) and~(\ref{P_{2n+2|1}}) into four blocks of equal size. A twofold application of~(\ref{invLemma1}) and~(\ref{invLemma2}), first to $P_{2n+2|0}$ and $ P_{2n+2|1}$ and, subsequently, to each of the blocks of $P_{2n+2|0}$ and $ P_{2n+2|1}$ yields~(\ref{P^{-1}_{2n+2|0}}) and~(\ref{P^{-1}_{2n+2|1}}).
\end{proof}
A transformation relating~$P_{n|0}$ with $P_{n|1}$, $P_{n|0}^{-1}$ with $P_{n|1}^{-1}$, $h_{n|0}$ with $h_{n|1}$ and $\omega_{n|0}$ with $\omega_{n|1}$ is derived next.
\begin{lemma}
Let $P_{n|0}$ and $P_{n|1}$ be trapdoor channel matrices, $n \in \mathbb{N}_0$. Then we have the following identities.\\
(a)\begin{align}
\label{trafo1}
 P_{n|1} &= \tilde{I}_n P_{n|0}\tilde{I}_n\\
 \label{trafo2}
 P_{n|0} &= \tilde{I}_n P_{n|1}\tilde{I}_n.
\end{align}
(b)\begin{align}
\label{invtrafo1}
 P_{n|1}^{-1} &= \tilde{I}_n P_{n|0}^{-1}\tilde{I}_n\\
 \label{invtrafo2}
 P_{n|0}^{-1} &= \tilde{I}_n P_{n|1}^{-1}\tilde{I}_n.
 \end{align}
(c)\begin{align}
\label{trafo1_h}
 h_{n|1} &= \tilde{I}_n h_{n|0}\\
 \label{trafo2_h}
 h_{n|0} &= \tilde{I}_n h_{n|1}.
\end{align}
(d)\begin{align}
\label{trafo1_w}
 \omega_{n|1} &= \tilde{I}_n \omega_{n|0}\\
 \label{trafo2_w}
 \omega_{n|0} &= \tilde{I}_n \omega_{n|1}.
\end{align}
(e) The row sums of $P^{-1}_{n|0}$ and $P^{-1}_{n|1}$ are~$1$.
\label{lem:trafo}
\end{lemma}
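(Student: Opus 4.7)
The plan is to prove the five parts in the listed order, leveraging the key fact that $\tilde{I}_n$ is an involutive permutation matrix ($\tilde{I}_n^2 = I_n$, $\tilde{I}_n^T = \tilde{I}_n$) whose two-sided action $\tilde{I}_n \cdot \tilde{I}_n$ on a $2^n \times 2^n$ matrix sends entry $(i,j)$ to $(2^n+1-i,\, 2^n+1-j)$. In view of Lemma~\ref{prop2}, parts (a)--(d) are the algebraic counterpart of the geometric statement that $\rho^{(n)}(P_{n|1})$ is obtained from $\rho^{(n)}(P_{n|0})$ by rotating 180$^\circ$ about the $z$-axis, so the result should be expected; the work is organizing the proof so each part feeds the next.

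For part (a), I would induct on $n$. The base case $n=0$ (or $n=1$, verified from the $2\times 2$ matrices in the paper) is immediate. For the inductive step, I would write
\[
\tilde{I}_{n+1} = \begin{bmatrix} 0 & \tilde{I}_n \\ \tilde{I}_n & 0 \end{bmatrix},
\]
substitute the block recursion~(\ref{Pn+1|0}) for $P_{n+1|0}$, and perform the two block multiplications in $\tilde{I}_{n+1} P_{n+1|0} \tilde{I}_{n+1}$. The cross-terms that appear are $\tilde{I}_n P_{n|0} \tilde{I}_n$ and $\tilde{I}_n P_{n|1} \tilde{I}_n$, which by the induction hypothesis collapse to $P_{n|1}$ and $P_{n|0}$, yielding exactly the block form~(\ref{Pn+1|1}) of $P_{n+1|1}$. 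The second identity follows by applying $\tilde{I}_n(\cdot)\tilde{I}_n$ to both sides and using $\tilde{I}_n^2 = I_n$.

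Parts (b)--(d) now drop out. For (b), invert $P_{n|1} = \tilde{I}_n P_{n|0} \tilde{I}_n$ and use $\tilde{I}_n^{-1} = \tilde{I}_n$. For (c), the two-sided action $\tilde{I}_n \cdot \tilde{I}_n$ permutes entries in a way that is compatible with any entrywise function, so $P_{n|1} \circ \log_2 P_{n|1} = \tilde{I}_n (P_{n|0} \circ \log_2 P_{n|0}) \tilde{I}_n$ (with the standard convention $0 \log_2 0 = 0$); multiplying by $1_n$ on the right and using $\tilde{I}_n 1_n = 1_n$ eliminates the trailing $\tilde{I}_n$ and produces~(\ref{trafo1_h}). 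For (d), simply compose (b) and (c) inside the definition $\omega_{n|s_0} = -P_{n|s_0}^{-1} h_{n|s_0}$. The second equations in each pair follow in each case from $\tilde{I}_n^2 = I_n$.

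For part (e), I would observe that $P_{n|s_0}$ is row-stochastic, so $P_{n|s_0} 1_n = 1_n$; left-multiplying by $P_{n|s_0}^{-1}$ gives $P_{n|s_0}^{-1} 1_n = 1_n$, which is exactly the statement that the row sums of the inverse equal $1$. The only place where genuine care is required is the inductive step in (a): the block bookkeeping has to be done carefully enough that the induction hypothesis produces precisely the four blocks of~(\ref{Pn+1|1}), and I expect that to be the main (but still routine) obstacle in the argument.
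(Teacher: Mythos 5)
Your proposal is correct and follows essentially the same route as the paper for parts (a)--(d): induction on $n$ with the block form $\tilde{I}_{n+1}=\begin{bmatrix}0 & \tilde{I}_n\\ \tilde{I}_n & 0\end{bmatrix}$ and the recursions~(\ref{Pn+1|0}),~(\ref{Pn+1|1}) for (a), then inversion, compatibility of the two-sided permutation with entrywise operations together with $\tilde{I}_n 1_n = 1_n$, and substitution into $\omega_{n|s_0}=-P_{n|s_0}^{-1}h_{n|s_0}$ for (b), (c), (d), each second identity via $\tilde{I}_n^2=I_n$. The only divergence is part (e): the paper argues via Gauss--Jordan elimination, noting that elementary row operations preserve the equality of row sums between the two halves of the augmented matrix $\begin{bmatrix}P_{n|0} & I_n\end{bmatrix}$, whereas you observe directly that row-stochasticity gives $P_{n|s_0}1_n = 1_n$ and hence $P_{n|s_0}^{-1}1_n = 1_n$; your one-line eigenvector argument is cleaner and equally rigorous, so nothing is lost.
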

\begin{proof}
(a): The proof is by induction. For $n=0$, the identities $P_{0|1} = \tilde{I}_0 P_{0|0}\tilde{I}_0$ and $P_{0|0} = \tilde{I}_0 P_{0|1}\tilde{I}_0$ clearly hold. Now suppose that~(\ref{trafo1}) and~(\ref{trafo2}) are true if $n$ is replaced by $n-1$. Then we have
\begin{align}
 \label{proof-trafo:1}
 \tilde{I}_n P_{n|0}\tilde{I}_n &= \begin{bmatrix}0 & \tilde{I}_{n-1}\\ \tilde{I}_{n-1} & 0\end{bmatrix}\begin{bmatrix}P_{n-1|0} & 0\\\frac{1}{2}P_{n-1|1} & \frac{1}{2}P_{n-1|0}\end{bmatrix}\begin{bmatrix}0 & \tilde{I}_{n-1}\\ \tilde{I}_{n-1} & 0\end{bmatrix} \\
 &= \begin{bmatrix}\frac{1}{2}\tilde{I}_{n-1} P_{n-1|0}\tilde{I}_{n-1} & \frac{1}{2}\tilde{I}_{n-1} P_{n-1|1}\tilde{I}_{n-1} \\0 & \tilde{I}_{n-1} P_{n-1|0}\tilde{I}_{n-1}\end{bmatrix}\nonumber
  \end{align}
 \begin{align}
 \label{proof-trafo:2}
 &= \begin{bmatrix}\frac{1}{2}P_{n-1|1} & \frac{1}{2}P_{n-1|0}\\0 & P_{n-1|1}\end{bmatrix}\\
 \label{proof-trafo:3}
 &= P_{n-1|1}
\end{align}
where (\ref{proof-trafo:1}) and~(\ref{proof-trafo:3}) are due to the recursive expressions~(\ref{Pn+1|0}) and~(\ref{Pn+1|1}) while (\ref{proof-trafo:2}) follows from the induction hypothesis. It remains to show~(\ref{trafo2}). But~(\ref{trafo2}) is a direct consequence of the just proven equation and using the identity $\tilde{I}_n\tilde{I}_n = I_n$.\\
(b): Follows immediately from (a) and the identity $\tilde{I}_n\tilde{I}_n = I_n$.\\ 
(c): Equation (\ref{trafo1_h}) follows from 
\begin{align}
  h_{n|1}&=-\left(P_{n|1}\circ\log_2 P_{n|1}\right)1_{n}\nonumber\\
  \label{pr:trafo2_h}
  &= -\left[\left(\tilde{I}_nP_{n|0}\tilde{I}_n\right)\circ\log_2 \left(\tilde{I}_nP_{n|0}\tilde{I}_n\right)\right]1_{n}\\
  \label{pr:trafo3_h}
  &= -\tilde{I}_n\left(P_{n|0}\circ\log_2 P_{n|0}\right)\tilde{I}_n1_{n}\\
  &= \tilde{I}_n h_{n|0}\nonumber
\end{align}
where (\ref{pr:trafo2_h}) follows by replacing~$P_{n|1}$ with~(\ref{trafo1}). Observe that the left and right multiplication of $P_{n|0}$ with~$\tilde{I}_n$ merely yields a new ordering of the elements of $P_{n|0}$.\footnote{To be more precise, $[P_{n|0}]_{i,j}$ is placed at position~$(2^n +1-i, 2^n +1-j)$ for all~$1 \leq i,j \leq 2^n$.} Since it does not matter whether the Hadamard product and the elementwise logarithm is applied before or after sorting the elements of the underlying matrix, i.e., before or after multiplying with $\tilde{I}_{2n}$,~(\ref{pr:trafo3_h}) is true.

Equation~(\ref{trafo2_h}) follows from (\ref{trafo1_h}) and the identity $\tilde{I}_n\tilde{I}_n = I_n$.\\
(d): Equation (\ref{trafo1_w}) follows from
\begin{align}
  \omega_{n|1}&=-P_{n|1}^{-1}h_{n|1}\nonumber\\
  \label{pr:trafo_w}
  &= -\tilde{I}_nP_{n|0}^{-1}h_{n|0}\\
  &= \tilde{I}_n \omega_{n|0},\nonumber
\end{align}
where (\ref{pr:trafo_w}) follows by replacing~$P_{n|1}$ and~$h_{n|1}$ with~(\ref{trafo1}) and~(\ref{trafo1_h}), respectively, and using the identity~$\tilde{I}_n\tilde{I}_n = I_n$.

Equation~(\ref{trafo2_w}) follows from (\ref{trafo1_w}) and the identity $\tilde{I}_n\tilde{I}_n = I_n$.\\
(e): A standard way to compute $P_{n|0}^{-1}$ is by Gauss-Jordan elimination, i.e., a sequence of elementary row operations applied to the augmented matrix $\begin{bmatrix}P_{n|0} & I_n\end{bmatrix}$ such that $\begin{bmatrix}I_n & P_{n|0}^{-1}\end{bmatrix}$ eventually results. Clearly, $P_{n|0}$ and $I_n$ are stochastic matrices, i.e., all row sums are equal to one. Thus, at each stage of performing the elementary row operations, the row sum of the left matrix equals the row sum of the right matrix. In particular, $P_{n|0}^{-1}$ has the same row sum as $I_n$.
\end{proof}
We can now state the recursive laws for the \textit{conditional entropy vector} and the \textit{weighted conditional entropy vector}.
\begin{lemma}
For $n\geq1$, $h_{2n+2|0}$ satisfies the recursion
\begin{equation}
h_{2n+2|0} = \begin{bmatrix}h_{2n|0} \\ \frac{1}{2}h_{2n|0} + \frac{1}{2}\tilde{I}_{2n}h_{2n|0} + 1_{2n} \\ \frac{3}{4}h_{2n|0} + \frac{1}{4}\tilde{I}_{2n}h_{2n|0} + \frac{3}{2} 1_{2n}\\ \frac{1}{4}h_{2n|0} + \frac{3}{4}\tilde{I}_{2n}h_{2n|0} + \frac{3}{2}1_{2n}\end{bmatrix}.
\label{cond_entropy_rec}
\end{equation}
The initial value for~$n=0$ is given by $h_{0|0} = 0$. 
\label{lemma:cond_entropy_rec}
\end{lemma}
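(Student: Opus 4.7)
The plan is to prove the recursion by directly expanding $h_{2n+2|0}=-\bigl(P_{2n+2|0}\circ\log_2 P_{2n+2|0}\bigr)1_{2n+2}$ block by block, using the $4\times 4$ block decomposition of $P_{2n+2|0}$ given in Lemma~\ref{lem:P2n+2_rec}(a). Partition the all-ones vector $1_{2n+2}$ into four stacked copies of $1_{2n}$ and evaluate the contribution coming from each of the four block rows separately. The result will be a four-block column vector whose entries match the right-hand side of~(\ref{cond_entropy_rec}).

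The main computational tool is the following scalar-factor identity: for any $c>0$ and any nonnegative matrix $A$ whose rows sum to one, one has
\begin{equation}
-\bigl(cA\bigr)\circ\log_2\bigl(cA\bigr)\cdot 1 \;=\; c\,\bigl[-(A\circ\log_2 A)1\bigr] \;-\; c\log_2(c)\cdot A\,1 \;=\; c\,h(A) \;-\; c\log_2(c)\cdot 1,
\end{equation}
where $h(A):=-(A\circ\log_2 A)1$. Applied to the trapdoor blocks with $c\in\{1,\tfrac12,\tfrac14\}$, this gives contributions of the form $c\,h_{2n|s_0} - c\log_2(c)\cdot 1_{2n}$. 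The $\tfrac12$ factor produces a $+\tfrac12 \cdot 1_{2n}$ term and the $\tfrac14$ factor produces a $+\tfrac12 \cdot 1_{2n}$ term (since $-\tfrac14\log_2\tfrac14=\tfrac12$). Since zero blocks contribute nothing and all constants from horizontal $0$'s vanish, each of the four block rows reduces to a sum of at most three such terms.

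The remaining step is bookkeeping: add the block contributions in each row. For instance, the second block row $[\tfrac12 P_{2n|1}\mid\tfrac12 P_{2n|0}\mid 0\mid 0]$ yields $\tfrac12 h_{2n|1}+\tfrac12\cdot 1_{2n}+\tfrac12 h_{2n|0}+\tfrac12\cdot 1_{2n}$. Then invoke Lemma~\ref{lem:trafo}(c) to replace every occurrence of $h_{2n|1}$ by $\tilde{I}_{2n}h_{2n|0}$. This rewriting is what allows the whole recursion to be closed in terms of $h_{2n|0}$ alone and yields precisely the coefficients $(\tfrac12,\tfrac12,1)$, $(\tfrac34,\tfrac14,\tfrac32)$, and $(\tfrac14,\tfrac34,\tfrac32)$ appearing in the second, third and fourth blocks of~(\ref{cond_entropy_rec}). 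The initial condition $h_{0|0}=0$ follows from $P_{0|0}=[1]$ since $-1\cdot\log_2 1=0$.

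No real obstacle is expected; the proof is essentially a careful accounting exercise. The only subtle point is remembering that each block $cP_{2n|s_0}$ still has row sum $c$ (not $1$), so the constant term $-c\log_2(c)\cdot A\,1$ uses the stochasticity $P_{2n|s_0}\,1_{2n}=1_{2n}$ of the underlying matrix, not of the scaled block. With this caveat observed, the four block-row expansions combine to give~(\ref{cond_entropy_rec}) directly, and no induction on $n$ is required beyond what is already built into the block recursion of Lemma~\ref{lem:P2n+2_rec}(a).
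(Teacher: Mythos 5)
Your proposal is correct and is essentially the paper's own argument: expand $h_{2n+2|0}$ block row by block row using the decomposition of Lemma~\ref{lem:P2n+2_rec}(a), pull the scalar factors $\tfrac12,\tfrac14$ out of the logarithms (each contributing $-c\log_2 c\cdot 1_{2n}$ via row-stochasticity of $P_{2n|s_0}$), and convert $h_{2n|1}$ into $\tilde{I}_{2n}h_{2n|0}$ — the paper does this via the matrix identity $P_{2n|1}=\tilde{I}_{2n}P_{2n|0}\tilde{I}_{2n}$ and the commutation of $\tilde I$ with the Hadamard/log operations, which is exactly the content of Lemma~\ref{lem:trafo}(c) that you invoke directly. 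The only difference is this packaging, so the two proofs coincide in substance (and your bookkeeping correctly yields the $\tfrac32\,1_{2n}$ constant in the fourth block as stated in the lemma).
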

We remark that in order to refer to the $i$th subvector, $1\leq i \leq 4$, of the conditional entropy vector we use the superscript~$(i)$. For instance, $h_{2n+2|0}^{(2)}$ refers to $\frac{1}{2}h_{2n|0} + \frac{1}{2}\tilde{I}_{2n}h_{2n|0} + 1_{2n}$.
\begin{proof}
The initial value $h_{0|0}$ can be directly computed using~$P_{0|0}=1$ in~(\ref{hadamard}). In order to show~(\ref{cond_entropy_rec}), we replace $P_{2n+2|0}$ in~(\ref{hadamard}) with~(\ref{P_{2n+2|0}}) from Lemma~\ref{lem:P2n+2_rec}~(a) and compute each of the four entries in~(\ref{cond_entropy_rec}) separately. Clearly, we have $h_{2n+2|0}^{(1)} = -\left(P_{2n|0}\circ\log_2 P_{2n|0}\right)1_{2n}$, which by definition equals $h_{2n|0}$. The three remaining terms can be written as follows
\begin{align}
h_{2n+2|0}^{(2)}&=\left[-\frac{1}{2}P_{2n|1}\circ\log_2 \left(\frac{1}{2}P_{2n|1}\right) - \frac{1}{2}P_{2n|0}\circ\log_2 \left(\frac{1}{2}P_{2n|0}\right)\right]1_{2n}\nonumber\\
\label{h2:1}
&=\left[\frac{1}{2}P_{2n|1} - \frac{1}{2}\left(\tilde{I}_{2n}P_{2n|0}\tilde{I}_{2n}\right)\circ\log_2 \left(\tilde{I}_{2n}P_{2n|0}\tilde{I}_{2n}\right) + \frac{1}{2}P_{2n|0} - \frac{1}{2}P_{2n|0}\circ\log_2 P_{2n|0} \right]1_{2n}\\
\label{h2:2}
&=1_{2n} - \frac{1}{2}\tilde{I}_{2n}\left(P_{2n|0}\circ\log_2 P_{2n|0}\right)1_{2n} +  \frac{1}{2}h_{2n|0}\\
&= \frac{1}{2}h_{2n|0} + \frac{1}{2}\tilde{I}_{2n}h_{2n|0} + 1_{2n}\nonumber
\end{align}
\begin{align}
h_{2n+2|0}^{(3)}&=\left[-\frac{1}{4}P_{2n|1}\circ\log_2 \left(\frac{1}{4}P_{2n|1}\right) - \frac{1}{4}P_{2n|0}\circ\log_2 \left(\frac{1}{4}P_{2n|0}\right)- \frac{1}{2}P_{2n|0}\circ\log_2 \left(\frac{1}{2}P_{2n|0}\right)\right]1_{2n}\nonumber\\
\label{h3:1}
&=\left[\frac{1}{2}P_{2n|1} - \frac{1}{4}\left(\tilde{I}_{2n}P_{2n|0}\tilde{I}_{2n}\right)\circ\log_2 \left(\tilde{I}_{2n}P_{2n|0}\tilde{I}_{2n}\right) + P_{2n|0} - \frac{3}{4}P_{2n|0}\circ\log_2 P_{2n|0} \right]1_{2n}\\
\label{h3:2}
&=\frac{3}{2} 1_{2n} - \frac{1}{4}\tilde{I}_{2n}\left(P_{2n|0}\circ\log_2 P_{2n|0}\right)1_{2n} +  \frac{3}{4}h_{2n|0}\\
&= \frac{3}{4}h_{2n|0} + \frac{1}{4}\tilde{I}_{2n}h_{2n|0} + \frac{3}{2} 1_{2n}\nonumber
\end{align}
\begin{align}
h_{2n+2|0}^{(4)}&=\left[-\frac{1}{2}P_{2n|1}\circ\log_2 \left(\frac{1}{2}P_{2n|1}\right) - \frac{1}{4}P_{2n|1}\circ\log_2 \left(\frac{1}{4}P_{2n|1}\right) - \frac{1}{4}P_{2n|0}\circ\log_2 \left(\frac{1}{4}P_{2n|0}\right)\right]1_{2n}\nonumber\\
\label{h4:1}
&=\left[P_{2n|1} - \frac{3}{4}\left(\tilde{I}_{2n}P_{2n|0}\tilde{I}_{2n}\right)\circ\log_2 \left(\tilde{I}_{2n}P_{2n|0}\tilde{I}_{2n}\right) + \frac{1}{2}P_{2n|0} - \frac{1}{4}P_{2n|0}\circ\log_2 P_{2n|0}\right]1_{2n}
\end{align}
\begin{align}
\label{h4:2}
&=\frac{3}{2} 1_{2n} - \frac{3}{4}\tilde{I}_{2n}\left(P_{2n|0}\circ\log_2 P_{2n|0}\right)1_{2n} +  \frac{1}{4}h_{2n|0} - \frac{3}{4}\tilde{I}_{2n}\left(P_{2n|0}\circ\log_2 P_{2n|0}\right)1_{2n}\\
&= \frac{1}{4}h_{2n|0} + \frac{3}{4}\tilde{I}_{2n}h_{2n|0} + 1_{2n}\nonumber
\end{align}
where~(\ref{h2:1}), (\ref{h3:1}) and (\ref{h4:1}), respectively, follow from expanding the logarithms in the previous equation and replacing the channel matrices corresponding to initial state one with~(\ref{trafo1}). The first term in~(\ref{h2:2}), (\ref{h3:2}) and (\ref{h4:2}), respectively, follows from the multiplication of the weighted matrices~$P_{2n|0}$ and~$P_{2n|1}$ with $1_n$. The second term in~(\ref{h2:2}), (\ref{h3:2}) and (\ref{h4:2}), respectively, follows by using the fact that it does not matter whether the Hadamard product and the elementwise logarithm is applied before or after sorting the elements of the underlying matrix, i.e., before or after multiplying with $\tilde{I}_{2n}$.
\end{proof}
\begin{lemma}
(a) For $n \geq 1$, $\omega_{2n|0}$ satisfies the recursion
\begin{equation}
 \omega_{2n|0} = \begin{bmatrix}\omega_{2n-2|0} \\ \omega_{2n-2|0} - 2\cdot 1_{2n-2} \\ \omega_{2n-2|0} - 2\cdot 1_{2n-2} \\ \omega_{2n-2|0} \end{bmatrix}
 \label{proof:omega_rec}
\end{equation}
with initial value $\omega_{0|0} = 0$.\\
(b) For $n \geq 1$, $\omega_{2n+1|0}$ satisfies the recursion
\begin{equation}
 \omega_{2n+1|0} = \begin{bmatrix}\omega_{2n-1|0} \\ \tilde{I}_{2n-1}\omega_{2n-1|0} \\ \omega_{2n-1|0} - 2 \cdot1_{2n-1} \\ \tilde{I}_{2n-1}\omega_{2n-1|0} - 2\cdot 1_{2n-1} \end{bmatrix}
 \label{proof:omega:rec_odd}
\end{equation}
with initial value $\omega_{1|0} = \begin{bmatrix}0 & -2\end{bmatrix}^T$.
\label{lemma:weighted_cond_entropy_rec}
\end{lemma}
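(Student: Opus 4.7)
The plan is to prove both parts by induction, expanding $\omega_{m|0}=-P^{-1}_{m|0}h_{m|0}$ blockwise. The two-step block recursions in Lemma~\ref{lem:P2n+2_rec}(b) and Lemma~\ref{lemma:cond_entropy_rec} were derived from (\ref{Pn+1|0})--(\ref{Pn+1|1}) without using the parity of the subscript, so they hold with $2n$ replaced by any $m\in\mathbb{N}_0$. Throughout, the routine simplifications rely on $P^{-1}_{m|0}h_{m|0}=-\omega_{m|0}$ (Definition~\ref{def:cond}), $P^{-1}_{m|0}1_m=1_m$ and hence $M_0\,1_m=1_m$ (Lemma~\ref{lem:trafo}(e) together with stochasticity of $P_{m|1}$), $\tilde I_m h_{m|0}=h_{m|1}$ (Lemma~\ref{lem:trafo}(c)), and $h_{m|1}=-P_{m|1}\omega_{m|1}$.

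For part~(a), the base case $\omega_{0|0}=0$ is immediate. In the inductive step, multiplying the four-block form of $P^{-1}_{2n|0}$ (from Lemma~\ref{lem:P2n+2_rec}(b) with $n\mapsto n-1$) against the four-block form of $h_{2n|0}$, the first and third subvectors collapse directly to $\omega_{2n-2|0}$ and $\omega_{2n-2|0}-2\cdot 1_{2n-2}$ by the routine identities (with the $\tilde I_{2n-2}h_{2n-2|0}$ contribution cancelling in block three). The second and fourth subvectors each leave a residual of the form $P^{-1}_{2n-2|0}P_{2n-2|1}(\omega_{2n-2|1}-\omega_{2n-2|0})$ (the mismatch between $M_0 h_{2n-2|0}=-P^{-1}_{2n-2|0}P_{2n-2|1}\omega_{2n-2|0}$ and $P^{-1}_{2n-2|0}h_{2n-2|1}=-P^{-1}_{2n-2|0}P_{2n-2|1}\omega_{2n-2|1}$), which vanishes precisely when $\omega_{2n-2|0}$ is $\tilde I_{2n-2}$-invariant. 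I therefore strengthen the induction to additionally carry the invariant $\tilde I_{2n}\omega_{2n|0}=\omega_{2n|0}$: it holds trivially at $n=0$, and inspection of the claimed four-block form shows it is preserved, since reversing the four blocks and then reversing within each block reproduces the same vector whenever $\omega_{2n-2|0}$ is invariant.

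For part~(b), the base $\omega_{1|0}=[0,-2]^T$ is checked directly. For the inductive step I first derive the one-step recursion $\omega_{m+1|0}=\begin{bmatrix}\omega_{m|0}\\ P^{-1}_{m|0}P_{m|1}(\omega_{m|1}-\omega_{m|0})+\omega_{m|0}-2\cdot 1_m\end{bmatrix}$ by the same blockwise expansion, using the one-step $P^{-1}_{m+1|0}=\begin{bmatrix}P^{-1}_{m|0}&0\\-P^{-1}_{m|0}P_{m|1}P^{-1}_{m|0}&2P^{-1}_{m|0}\end{bmatrix}$ (from (\ref{Pn+1|0}) and the block-inversion formula (\ref{invLemma1})) and the one-step $h_{m+1|0}=\begin{bmatrix}h_{m|0}\\\tfrac12 h_{m|0}+\tfrac12 h_{m|1}+1_m\end{bmatrix}$. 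Specialising to $m=2n$ and invoking the invariance $\omega_{2n|0}=\omega_{2n|1}$ from part~(a) annihilates the residual and yields $\omega_{2n+1|0}=\begin{bmatrix}\omega_{2n|0}\\ \omega_{2n|0}-2\cdot 1_{2n}\end{bmatrix}$. Substituting the four-block form of $\omega_{2n|0}$ from part~(a) produces an eight-block vector; regrouping consecutive pairs of sub-blocks into four blocks of size $2^{2n-1}$ and using $\omega_{2n-1|0}=\begin{bmatrix}\omega_{2n-2|0}\\ \omega_{2n-2|0}-2\cdot 1_{2n-2}\end{bmatrix}$ together with $\tilde I_{2n-1}\omega_{2n-1|0}=\begin{bmatrix}\omega_{2n-2|0}-2\cdot 1_{2n-2}\\ \omega_{2n-2|0}\end{bmatrix}$ identifies the four blocks as $\omega_{2n-1|0},\ \tilde I_{2n-1}\omega_{2n-1|0},\ \omega_{2n-1|0}-2\cdot 1_{2n-1},\ \tilde I_{2n-1}\omega_{2n-1|0}-2\cdot 1_{2n-1}$, as claimed.

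The principal obstacle is the coupled induction in part~(a): the $\tilde I$-invariance of $\omega_{2n|0}$ is simultaneously an output and a prerequisite of the inductive step, so it must be threaded through the proof. Once that device is set up, everything else is systematic block arithmetic, and the matching in part~(b) between the eight sub-blocks produced by the one-step route and the four blocks claimed in the lemma is routine once one recalls how $\tilde I_{2n-1}$ acts on a stacked vector (swap halves and reverse each, collapsing to simple swaps under the even-indexed invariance from part~(a)).
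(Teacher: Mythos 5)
Your proposal is correct and takes essentially the same approach as the paper: induction with blockwise expansion of $\omega_{m|0}=-P_{m|0}^{-1}h_{m|0}$ via the block recursions of Lemma~\ref{lem:P2n+2_rec}(b) and Lemma~\ref{lemma:cond_entropy_rec}, stochasticity of $P^{-1}_{m|0}$, and the $\tilde I$-invariance (palindromicity) of the even-indexed $\omega$ as the cancellation device — the paper invokes that invariance ``by hypothesis'' where you carry it explicitly in the induction, and in (b) the paper identifies $\omega_{2n+1|0}$ as the first half of $\omega_{2n+2|0}$ rather than recomputing the one-step recursion with its residual, but both routes reduce to $\omega_{2n+1|0}=\bigl[\omega_{2n|0};\,\omega_{2n|0}-2\cdot 1_{2n}\bigr]$ followed by the same regrouping. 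The only inaccuracy is cosmetic: the block-4 residual in (a) is $\bigl[3P^{-1}_{2n-2|0}P_{2n-2|1}-2\bigl(P^{-1}_{2n-2|0}P_{2n-2|1}\bigr)^{2}\bigr]\bigl(\omega_{2n-2|1}-\omega_{2n-2|0}\bigr)$ rather than literally $P^{-1}_{2n-2|0}P_{2n-2|1}\bigl(\omega_{2n-2|1}-\omega_{2n-2|0}\bigr)$, but it vanishes under the same invariance, so nothing breaks.
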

We remark that in order to refer to the $i$th subvector, $1\leq i \leq 4$, of the weighted conditional entropy vector we use the superscript~$(i)$. For instance, $\omega_{2n|0}^{(2)}$ refers to $\omega_{2n-2|0} - 2\cdot 1_{2n-2}$.
\begin{proof}
(a): We first show by induction that (\ref{proof:omega_rec}) holds. The case $n = 0$ can be verified using Definition~\ref{def:cond}~(b) with $P_{0|0} = P_{0|0}^{-1} = 1$. Now assume that~(\ref{proof:omega_rec}) holds for some~$n$.  In order to show~(\ref{proof:omega_rec}) for $n+1$, we evaluate $\omega_{2n+2|0}$ using~(\ref{proof:omega}) and replacing~$P_{2n+2|0}^{-1}$ and $h_{2n+2|0}$ with~(\ref{P^{-1}_{2n+2|0}}) and~(\ref{cond_entropy_rec}). Then we have
\begin{equation}
\omega_{2n+2|0} = \begin{bmatrix}-P_{2n|0}^{-1}h_{2n+2|0}^{(1)}\\P_{2n|0}^{-1}\left(P_{2n|1}P_{2n|0}^{-1}h_{2n+2|0}^{(1)} - 2 h_{2n+2|0}^{(2)}\right)\\ P_{2n|0}^{-1} \left(h_{2n+2|0}^{(2)} - 2h_{2n+2|0}^{(3)}\right)\\ M_0\left(-2P_{2n|1}P_{2n|0}^{-1}h_{2n+2|0}^{(1)} + 3h_{2n+2|0}^{(2)} + 2h_{2n+2|0}^{(3)}\right) - 4P_{2n|0}^{-1}h_{2n+2|0}^{(4)}\end{bmatrix}. 
\label{proof:omega_rec_expl}
\end{equation}
Recall from Lemma~\ref{lemma:cond_entropy_rec} that~$h_{2n+2|0}^{(1)} = h_{2n|0}$. Hence, by definition, the first entry of~(\ref{proof:omega_rec_expl}) is equal to $\omega_{2n|0}$. 

The second entry of~(\ref{proof:omega_rec_expl}) is derived as follows. Replacing~$h_{2n+2|0}^{(1)}$ and~$h_{2n+2|0}^{(2)}$ with the corresponding expressions from~(\ref{cond_entropy_rec}), we obtain
\begin{equation}
\omega_{2n+2|0}^{(2)} = P_{2n|0}^{-1}\left(P_{2n|1}P_{2n|0}^{-1}h_{2n|0} - h_{2n|0} - \tilde{I}_{2n} h_{2n|0} - 2\cdot1_{2n}\right).
\label{w2n+2(2)}
\end{equation}
In order to simplify~(\ref{w2n+2(2)}), observe that
\begin{equation}
\label{implication1}
- \tilde{I}_{2n}\omega_{2n|0} + \omega_{2n|0} = 0
\end{equation}
since $\omega_{2n|0}$ is a palindromic vector by hypothesis. A further manipulation of~(\ref{implication1}), namely using (\ref{proof:omega_first}), (\ref{invtrafo2}) and the relation~$\tilde{I}_{2n}\tilde{I}_{2n} = I_{2n}$, yields
\begin{equation}
\label{implication2}
P_{2n|0}^{-1}\cdot h_{2n|0} - P_{2n|1}^{-1}\tilde{I}_{2n}\cdot h_{2n|0}  = 0
\end{equation}
which implies
\begin{equation}
\label{implication3}
P_{2n|1}P_{2n|0}^{-1}h_{2n|0} - \tilde{I}_{2n}\cdot h_{2n|0} = 0.
\end{equation}
Using~(\ref{implication3}), the definition of~$\omega_{2n|0}$ and  Lemma~\ref{lem:trafo}~(e), i.e., that $P_{2n|0}^{-1}$ is a stochastic matrix, in~(\ref{w2n+2(2)}) we obtain~$\omega_{2n+2|0}^{(2)}=\omega_{2n|0} - 2\cdot 1_{2n}$. 

The third entry of~(\ref{proof:omega_rec_expl}) is derived as follows. After replacing~$h_{2n+2|0}^{(2)}$ and~$h_{2n+2|0}^{(3)}$ in~(\ref{proof:omega_rec_expl}) with the corresponding expressions from~(\ref{cond_entropy_rec}), it can be directly seen that~$\omega_{2n+2|0}^{(3)} = \omega_{2n|0} - 2\cdot 1_{2n}$. 

Regarding the fourth entry in~(\ref{proof:omega_rec_expl}), we begin with the first term in parentheses, i.e.,
\begin{align}
&-2P_{2n|1}P_{2n|0}^{-1}h_{2n+2|0}^{(1)} + 3h_{2n+2|0}^{(2)} + 2h_{2n+2|0}^{(3)}\nonumber\\
\label{w2n+2(4)1}
=&-2\left(P_{2n|1}P_{2n|0}^{-1}h_{2n+2|0}^{(1)} - 2h_{2n+2|0}^{(2)}\right) - \left(h_{2n+2|0}^{(2)} - 2h_{2n+2|0}^{(3)}\right) \\
\label{w2n+2(4)2}
=& -3 P_{2n|0}\left(\omega_{2n|0} - 2\cdot 1_{2n}\right).
\end{align}
Equation~(\ref{w2n+2(4)2}) holds since the first and the second parentheses of~(\ref{w2n+2(4)1}) are equal to $P_{2n|0}\omega_{2n+2|0}^{(2)}$ and~$P_{2n|0}\omega_{2n+2|0}^{(3)}$, respectively, which follows from~(\ref{proof:omega_rec_expl}) by inspection. Moreover, $\omega_{2n+2|0}^{(2)}$ and $\omega_{2n+2|0}^{(3)}$ are equal to~$\omega_{2n|0} - 2\cdot 1_{2n}$ as we just have shown. Hence, using~(\ref{w2n+2(4)2}) in~$\omega_{2n+2|0}^{(4)}$ and replacing $h_{2n+2|0}^{(4)}$ with the corresponding expression from~(\ref{cond_entropy_rec}) and $M_0$ with its definition from Lemma~\ref{lem:P2n+2_rec}~(b), we obtain
\begin{align}
\omega_{2n+2|0}^{(4)} &= P_{2n|0}^{-1}\left(-3P_{2n|1}\left(\omega_{2n|0} - 2\cdot 1_{2n}\right) - h_{2n|0} - 3\tilde{I}_{2n}h_{2n|0} - 6\cdot1_{2n}\right)\nonumber\\
\label{step1}
&= 3P_{2n|0}^{-1}\left(-P_{2n|1}\omega_{2n|0} - \tilde{I}_{2n}h_{2n|0}\right) + 6P_{2n|0}^{-1}\left(P_{2n|1}1_{2n} - 1_{2n}\right) -P_{2n|0}^{-1}h_{2n|0}\\
&= -P_{2n|0}^{-1}h_{2n|0} \nonumber\\
&= \omega_{2n|0}.\nonumber
\end{align}
Observe that the first parentheses in~(\ref{step1}), which is equal to the left hand side of~(\ref{implication3}), evaluates to~$0$. Also the second parentheses in~(\ref{step1}) evaluates to~$0$ since~$P_{2n|1}$ is a stochastic matrix.

(b): Recall the recursions
\begin{align}
 \label{step2:1}  
P_{2n+2|0} &= \begin{bmatrix} P_{2n+1|0} & 0 \\ \frac{1}{2}P_{2n+1|1} & \frac{1}{2}P_{2n+1|0}\\\end{bmatrix}\\               
P_{2n+2|0}^{-1} &= \begin{bmatrix} P_{2n+1|0}^{-1} & 0 \\ P_{2n+1|0}^{-1}P_{2n+1|1}P_{2n+1|0}^{-1} & 2P_{2n+1|0}^{-1}\\\end{bmatrix}.
 \label{step2:2}                                                                                                 
\end{align}
The first $2^{2n+1}$ entries, i.e., the first half, of~$\omega_{2n+2|0}$ are equal to $P_{2n+1|0}^{-1}\left(P_{2n+1|0}\circ\log_2 P_{2n+1|0}\right)1_{2n+1}$, which in turn is equal to~$\omega_{2n+1|0}$. This follows from a straightforward computation using Definition~\ref{def:cond}(b) together with~(\ref{step2:1}) and~(\ref{step2:2}). Hence, under consideration of~(\ref{proof:omega_rec}), we have
\begin{equation}
 \omega_{2n+1|0} = \begin{bmatrix}\omega_{2n|0}\\ \omega_{2n|0} - 2\cdot1_{2n}\end{bmatrix}.
  \label{step2:3}   
\end{equation}
Equivalently, $\omega_{2n-1|0}$ is equal to the first $2^{2n-1}$ entries of $\omega_{2n|0}$. Then we have
\begin{equation}
 \omega_{2n|0} = \begin{bmatrix}\omega_{2n-1|0}\\ \tilde{I}_{2n-1}\cdot\omega_{2n-1|0}\end{bmatrix}.
  \label{step2:4}   
\end{equation}
In order to derive the second entry of~(\ref{step2:4}) observe that the multiplication of $\omega_{2n-1|0}$ with~$\tilde{I}_{2n-1}$ turns~$\omega_{2n-1|0}$ upside down (i.e., the last entry of $\omega_{2n-1|0}$ becomes the first entry, the second last entry becomes the second entry and so on). Applying this multiplication to $\omega_{2n-1|0}$, which is written in the form of~(\ref{step2:3}), and using the fact that~$\omega_{2n-2|0}$ is a palindromic vector, we see that $\tilde{I}_{2n-1}\cdot\omega_{2n-1|0}$ is equal to the last~$2^{2n-1}$ entries, i.e., second half, of the vector~(\ref{proof:omega_rec}). By replacing $\omega_{2n|0}$ in~(\ref{step2:3}) with~(\ref{step2:4}), we obtain~(\ref{proof:omega:rec_odd}). The initial value $\omega_1 = \begin{bmatrix}0 & -2\end{bmatrix}^T$ follows directly by evaluating~(\ref{proof:omega_rec}) for $n=1$ and taking the first two entries.
\end{proof}
\begin{remark}
The recursions derived in Lemma~\ref{lemma:cond_entropy_rec} and~\ref{lemma:weighted_cond_entropy_rec} are with respect to initial state~$s_0=0$. They can be easily converted to recursions with respect to initial state~$s_0=1$ by using~(\ref{trafo1_h}) and~(\ref{trafo1_w}) from Lemma~\ref{lem:trafo}.
\end{remark}

\subsection{Proof of the Main Result}
By evaluating~(\ref{thm:ash:eq2}) based on Lemma~\ref{lemma:weighted_cond_entropy_rec}, we find exact solutions to the optimization problem~(\ref{objective})-(\ref{optconstraint2}).
\begin{theorem}
Consider the convex optimization problem~(\ref{objective}) to (\ref{optconstraint2}). The absolute maximum for input blocks of even length~$2n$ is
\begin{equation}
C^{\uparrow}_{2n} = \frac{1}{2}\log_2\left(\frac{5}{2}\right)
\label{main:eq1}
\end{equation}
for all $n \in \mathbb{N}$. For input blocks of odd length $2n-1$, the absolute maximum is
\begin{equation}
C^{\uparrow}_{2n - 1} = \frac{1}{2n-1}\left[\log_2\left(\frac{5}{4}\right) + (n-1)\cdot\log_2\left(\frac{5}{2}\right)\right], 
\label{main:eq2}
\end{equation}
where $n \in \mathbb{N}$.
\label{main}
\label{thm:main}
\end{theorem}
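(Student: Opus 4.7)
The plan is to evaluate the matrix expression (\ref{C:ash:matrix}) directly, using the recursions for the weighted conditional entropy vector that Lemma~\ref{lemma:weighted_cond_entropy_rec} already provides. Rewriting (\ref{C:ash:matrix}) via Definition~\ref{def:cond}(b) gives
\[
C_n^\uparrow = \frac{1}{n}\log_2\left[1_n^T \exp_2(\omega_{n|0})\right],
\]
so the whole problem reduces to computing the scalar $S_n := 1_n^T \exp_2(\omega_{n|0})$ in closed form.

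For the even case, I would apply the block recursion (\ref{proof:omega_rec}): the four sub-vectors of $\omega_{2n|0}$ are either equal to $\omega_{2n-2|0}$ or to $\omega_{2n-2|0}-2\cdot 1_{2n-2}$. Since $\exp_2$ acts entrywise, summing over the four blocks gives
\[
S_{2n} = (1 + 2^{-2} + 2^{-2} + 1)\, S_{2n-2} = \tfrac{5}{2}\, S_{2n-2}.
\]
The base case $\omega_{0|0}=0$ yields $S_0 = 1$, so by induction $S_{2n} = (5/2)^n$, and (\ref{main:eq1}) follows by taking $\frac{1}{2n}\log_2$.

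For the odd case, I would use recursion (\ref{proof:omega:rec_odd}). The essential observation is that multiplication by $\tilde{I}_{2n-1}$ only permutes the entries of a vector, so $1^T\exp_2(\tilde{I}_{2n-1} v)=1^T\exp_2(v)$ for any $v$. Summing the four sub-blocks of $\omega_{2n+1|0}$ therefore again produces the factor $1+1+2^{-2}+2^{-2}=5/2$, i.e.\ $S_{2n+1} = \tfrac{5}{2}\,S_{2n-1}$. The base case $\omega_{1|0}=[0,-2]^T$ gives $S_1 = 1 + 2^{-2} = 5/4$, so $S_{2n-1} = \tfrac{5}{4}(5/2)^{n-1}$, and taking $\frac{1}{2n-1}\log_2$ gives exactly (\ref{main:eq2}).

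All the work has effectively already been done in Lemmas~\ref{lem:P2n+2_rec}--\ref{lemma:weighted_cond_entropy_rec}; the only thing left is the two-line telescoping argument above. The one conceptual point worth highlighting carefully is the invariance of $1^T\exp_2(\cdot)$ under $\tilde{I}_{2n-1}$, which is what makes the odd recursion collapse to the same factor $5/2$ as in the even case; everything else is a routine induction.
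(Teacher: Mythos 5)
Your proposal is correct and follows essentially the same route as the paper: it evaluates~(\ref{C:ash:matrix}) through the recursions of Lemma~\ref{lemma:weighted_cond_entropy_rec}, obtains the multiplicative factor $2+2\cdot 2^{-2}=\tfrac{5}{2}$ per two-step increment, and anchors the induction at $C_1^\uparrow=\log_2(5/4)$ and $C_2^\uparrow=\tfrac{1}{2}\log_2(5/2)$. The only cosmetic difference is that you make explicit the invariance $1^T\exp_2(\tilde{I}_{2n-1}v)=1^T\exp_2(v)$, which the paper uses implicitly when collapsing the permuted blocks in the odd-length case.
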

\begin{proof}
Without loss of generality, the initial state is assumed to be $s_0=0$. Recall~(\ref{C:ash:matrix}), which for input blocks of length~$2n+k$ reads as
\begin{equation}
C^{\uparrow}_{2n+k} = \frac{1}{2n+k}\log_2\left[1_{2n+k}^T \exp_2\left(\omega_{2n+k|0}\right)\right]
\label{main:C}
\end{equation}
where $n \in \mathbb{N}_0, k =1,2$. For $n = 0$, a straightforward computation shows that $C^{\uparrow}_{1} = \log_2\left(\frac{5}{4}\right)$ and  $C^{\uparrow}_{2} = \frac{1}{2}\log_2\left(\frac{5}{2}\right)$. Now assume that (\ref{main:eq1}) and (\ref{main:eq2}) hold for some~$n$. In particular, suppose
\begin{equation}
  1^T_{2n}\exp_2\left(\omega_{2n|0}\right) = \left(\frac{5}{2}\right)^n
  \label{ind_hyp1}
\end{equation}
and
\begin{equation}
1^T_{2n-1}\exp_2\left(\omega_{2n-1|0}\right) = \frac{5}{4}\left(\frac{5}{2}\right)^{n-1}.
  \label{ind_hyp2}
\end{equation}
Replacing~$\omega_{2n+2|0}$ and~$\omega_{2n+1|0}$ with the recursions derived in Lemma~\ref{lemma:weighted_cond_entropy_rec}, we obtain
\begin{align}
 1^T_{2n+2}\exp_2\left(\omega_{2n+2|0}\right) &= 1^T_{2n}\left[2\exp_2\left(\omega_{2n|0}\right) + 2\exp_2\left(\omega_{2n|0}-2\cdot 1_{2n}\right)\right]\nonumber\\
 &= \left(2 + 2\cdot2^{-2}\right) 1^T_{2n}\exp_2\left(\omega_{2n|0}\right)\nonumber
\end{align}
and
\begin{align}
 1^T_{2n+1}\exp_2\left(\omega_{2n+1|0}\right) &= 1^T_{2n-1}\left[2\exp_2\left(\omega_{2n-1|0}\right) + 2\exp_2\left(\omega_{2n-1|0}-2\cdot 1_{2n}\right)\right]\nonumber\\
 &= \left(2 + 2\cdot2^{-2}\right) 1^T_{2n-1}\exp_2\left(\omega_{2n-1|0}\right).\nonumber
 \end{align}
 Hence, using (\ref{main:C}) and the induction hypotheses (\ref{ind_hyp1}) and (\ref{ind_hyp2}), we have
 \begin{align}
 C^{\uparrow}_{2n+2} &= \frac{1}{2n+2}\log_2\left[\left(2 + 2\cdot2^{-2}\right) 1^T_{2n}\exp_2\left(\omega_{2n|0}\right)\right] \nonumber\\
 &= \frac{1}{2}\log_2\left(\frac{5}{2}\right)\nonumber
 \end{align}
 and
 \begin{align}
 C^{\uparrow}_{2n+1} &= \frac{1}{2n+1}\log_2 \left[\left(2 + 2\cdot2^{-2}\right) 1^T_{2n-1}\exp_2\left(\omega_{2n-1|0}\right)\right] \nonumber\\
 &= \frac{1}{2n+1}\left[\log_2\left(\frac{5}{4}\right) + n\cdot\log_2\left(\frac{5}{2}\right)\right].\nonumber
\end{align}
\end{proof}
\begin{remark}
Observe that $\lim_{n\rightarrow \infty}C^{\uparrow}_{2n+1} = \frac{1}{2}\log_2\left(\frac{5}{2}\right)$, where convergence is from below. Hence, we have
\[
\max_{n \in \mathbb{N}} C_n^{\uparrow} = \frac{1}{2}\log_2\left(\frac{5}{2}\right). 
\]
\end{remark}
Unfortunately, the distributions corresponding to~(\ref{main:eq1}) and~(\ref{main:eq2}) involve negative ``probabilities'' -- otherwise the capacity of the trapdoor channel would have been established. We state this as a formal remark.
\begin{remark}
Condition (\ref{thm:ash:condition}) does not hold for all $k = 1,\dots,2^n$, which can be seen as follows. For a trapdoor channel~$P_{n|0}$, we have
\begin{equation}
\begin{bmatrix}d_k\end{bmatrix}_{1\leq k \leq 2^n} = \left(P^{-1}_{n|0}\right)^T\exp_2\left(\omega_n\right). 
\label{rem:dk}
\end{equation}
Applying (\ref{invLemma1}) to $P_{n|0}$, which is written in the form of (\ref{Pn+1|0}), and taking the transpose, then applying~(\ref{invLemma1}) to the right bottom block of this matrix and taking the transpose and so on eventually shows that the second last row of $\left(P^{-1}_{n|0}\right)^T$ equals
\begin{equation}
\begin{bmatrix}0 & \cdots & 0 & 2^{n-1} & -2^{n-1}\end{bmatrix}. \nonumber
\end{equation}
Moreover, using Lemma~\ref{lemma:weighted_cond_entropy_rec}, it follows that the second to last entry and the last entry in~$\omega_n$ equals~$-2$ and~$0$, respectively. Inserting the gathered quantities into~(\ref{rem:dk}) yields
\begin{equation}
d_{2^n-1} = -3\cdot2^{n-3} < 0,\quad n \in \mathbb{N}. \nonumber
\end{equation}
\end{remark}
\section{Conclusions}
\label{sec:concl}
We have presented two different views on the trapdoor channel. The fractal view was motivated by the wish to find an explicit expression for the trapdoor channel -- a feature which would greatly simplify the capacity problem. Furthermore, the various views motivate using tools from other fields, e.g., fractal geometry.

Subsequently, we have focused on the convex optimization problem~(\ref{objective}) to~(\ref{optconstraint2}) where the feasible set is larger than the probability simplex. An absolute maximum of the $n$-letter mutual information was established for any~$n \in \mathbb{N}$ by using the method of Lagrange multipliers. The same absolute maximum  $\frac{1}{2}\log_2\left(\frac{5}{2}\right)\approx 0.6610$~b/u results for all even $n$ and the sequence of absolute maxima corresponding to odd block lengths converges from below to $\frac{1}{2}\log_2\left(\frac{5}{2}\right)$~b/u as the block length increases. Unfortunately, all absolute maxima are attained outside the probability simplex. Hence, instead of establishing the capacity of the trapdoor channel, we have shown only that~$\frac{1}{2}\log_2\left(\frac{5}{2}\right)$~b/u is an upper bound on the capacity. This upper bound is, to be best of our knowledge, the tightest known bound. Notably, this upper bound is strictly smaller than the feedback capacity~\cite{PeHaCuRo08}. Moreover, the result gives an indirect justification that the capacity of the trapdoor channel is attained on the boundary of the probability simplex.

\section*{Acknowledgment}
The author is supported by the German Ministry of Education and Research in the framework of the Alexander von Humboldt-Professorship and would like to thank Prof. Haim Permuter who suggested to use \cite[Theorem 3.3.3]{Ash65}. Moreover, the author wishes to thank Prof. Gerhard Kramer and Prof. Tsachy Weissman for helpful discussions. 
\bibliographystyle{IEEEtran}
\bibliography{techReport_v2}
\end{document}